\providecommand{\R}{\mathbb{R}}
\providecommand{\C}{\mathbb{C}}
\renewcommand{\C}{\mathbb{C}}
\providecommand{\T}{\mathbb{T}}
\providecommand{\N}{\mathbb{N}}
\providecommand{\Z}{\mathbb{Z}}
\providecommand{\eps}{\varepsilon}
\providecommand{\abs}[1]{\left \lvert #1 \right \rvert}
\providecommand{\sabs}[1]{\lvert #1 \vert}
\providecommand{\babs}[1]{\bigl \lvert #1 \bigr \rvert}
\providecommand{\norm}[1]{\left \lVert #1 \right \rVert}
\providecommand{\snorm}[1]{\lVert #1 \rVert}
\providecommand{\bnorm}[1]{\bigl \lVert #1 \bigr \rVert}
\providecommand{\Bnorm}[1]{\Bigl \lVert #1 \Bigr \rVert}
\providecommand{\scpro}[2]{\left \langle #1 , #2 \right \rangle}
\providecommand{\bscpro}[2]{\bigl \langle #1 , #2 \bigr \rangle}
\providecommand{\sopro}[2]{\vert #1 \rangle \langle #2 \vert}
\providecommand{\dd}{\mathrm{d}}
\providecommand{\id}{\mathrm{id}}
\renewcommand{\Re}{\mathrm{Re} \,}
\providecommand{\ie}{i.~e.~}
\providecommand{\eg}{e.~g.~}
\providecommand{\Hil}{\mathcal{H}}
\providecommand{\Cont}{\mathcal{C}}
\providecommand{\BCont}{\mathcal{C}_{\mathrm{b}}}
\providecommand{\Weyl}{\sharp}
\providecommand{\piref}{\pi_{\mathrm{ref}}}
\providecommand{\Hper}{H_{\mathrm{per}}}
\providecommand{\Zak}{\mathcal{Z}}
\providecommand{\BZ}{M^*}
\providecommand{\WS}{M}
\providecommand{\heff}{{h_{\mathrm{eff}}}}
\providecommand{\El}{{\mathsf{E}}}
\providecommand{\Hoer}[1]{\mathcal{S}^{#1}}
\providecommand{\PSpace}{\Xi}
\providecommand{\ordere}[1]{\mathcal{O}(\eps^{#1})}
\providecommand{\heff}{h_{\mathrm{eff}}}
\providecommand{\heff}{{h_{\mathrm{eff}}}}
\providecommand{\Href}{\mathcal{H}_{\mathrm{ref}}}
\providecommand{\PA}{\mathsf{P}^A}
\providecommand{\Qe}{\mathsf{Q}}
\providecommand{\WeylSys}{W}
\providecommand{\Htau}{\Hil_{\tau}}
\providecommand{\Schwartz}{\mathcal{S}}
\providecommand{\magW}{\Weyl^B}
\providecommand{\magWel}{\magW_{\eps,\lambda}}
\providecommand{\WS}{M}
\providecommand{\Hper}{H_{\mathrm{per}}}
\providecommand{\Weyl}{\mathrm{W}^M_{\varepsilon}}
\providecommand{\Href}{\mathcal{H}_{\mathrm{ref}}}
\providecommand{\Schwartz}{\mathcal{S}}
\providecommand{\Piref}{\Pi_{\mathrm{ref}}}
\providecommand{\Zak}{\mathcal{Z}}
\providecommand{\Hfast}{\mathcal{H}_{\mathrm{fast}}}
\providecommand{\Hslow}{\mathcal{H}_{\mathrm{slow}}}
\providecommand{\Hper}{H_{\mathrm{per}}} 
\providecommand{\piref}{\pi_{\mathrm{ref}}}
\providecommand{\Index}{\mathcal{I}}
\providecommand{\Htau}{L^2_{\tau}(M^{\ast},L^2(M))}
\providecommand{\Fs}{\mathcal{F}_{\sigma}}
\providecommand{\Hslow}{\mathcal{H}_{\mathrm{slow}}}
\providecommand{\Hfast}{\mathcal{H}_{\mathrm{fast}}}
\providecommand{\Qe}{{Q_{\eps}}}
\providecommand{\Hoer}[1]{\mathcal{S}^{#1}}
\providecommand{\Hoerr}[2]{\mathcal{S}^{#1}_{#2}}
\providecommand{\Hoermr}[2]{\mathcal{S}^{#1}_{#2}}
\providecommand{\SemiHoer}[1]{\mathrm{A}\Hoer{#1}}
\providecommand{\SemiHoermr}[2]{\mathrm{A}\mathcal{S}^{#1}_{#2}}
\providecommand{\heff}{h_{\mathrm{eff}}}
\providecommand{\reff}{{r_{\mathrm{eff}}}}
\providecommand{\keff}{{k_{\mathrm{eff}}}}
\providecommand{\BerryC}{\mathcal{A}}
\providecommand{\Eb}{E_{\ast}}
\providecommand{\KA}{{\mathsf{K}^A}}
\providecommand{\Reps}{{\mathsf{R}}}
\providecommand{\order}{\mathcal{O}}
\providecommand{\ordern}{\mathcal{O}_{\norm{\cdot}}}
\providecommand{\BZak}{BFZ~}
\providecommand{\Opx}{\mathrm{Op}}
\providecommand{\Opk}{\mathfrak{Op}}
\providecommand{\SemiTau}{A \Cont^{\infty}_{\tau}}
\providecommand{\opHamiltonian}{\hat{H}}
\providecommand{\Hamiltonian}{H}
\begin{document}

\markboth{G.~De~Nittis and M.~Lein}
{Applications of Magnetic $\Psi$DO Techniques to SAPT}

%
\catchline{}{}{}{}{}
%

\title{Applications of Magnetic $\Psi$DO \linebreak Techniques to SAPT}

\author{Giuseppe De Nittis}

\address{SISSA, via Bonomea, 265
34136 Trieste TS, Italy \\
\email{denittis@sissa.it}}

\author{Max Lein}

\address{Technische Universität München, Zentrum Mathematik, Boltzmannstraße 3, 85747 Garching, Germany\\
\email{lein@ma.tum.de}}

\maketitle

\begin{history}
\received{(14 September 2010)}
\revised{(14 September 2010)}
\end{history}

\begin{abstract}
	In this review, we show how advances in the theory of magnetic pseudodifferential operators (magnetic~$\Psi$DO) can be put to good use in space-adiabatic perturbation theory (SAPT). As a particular example, we extend results of \cite{PST:effDynamics:2003} to a more general class of magnetic fields: we consider a single particle moving in a periodic potential which is subjectd to a weak and slowly-varying electromagnetic field. In addition to the semiclassical parameter $\eps \ll 1$ which quantifies the separation of spatial scales, we explore the influence of an additional parameter $\lambda$ that allows us to selectively switch off the magnetic field. 
	
	We find that even in the case of magnetic fields with components in $\BCont^{\infty}(\R^d)$, \eg for constant magnetic fields, the results of Panati, Spohn and Teufel hold, \ie to each isolated family of Bloch bands, there exists an associated almost invariant subspace of $L^2(\R^d)$ and an effective hamiltonian which generates the dynamics within this almost invariant subspace. In case of an isolated non-degenerate Bloch band, the full quantum dynamics can be approximated by the hamiltonian flow associated to the semiclassical equations of motion found in \cite{PST:effDynamics:2003}.
\end{abstract}

\keywords{Magnetic field, pseudodifferential operators, Weyl calculus, Bloch electron. }

\ccode{Mathematics Subject Classification 2000: 81Q15, 81Q20, 81S10}

\section{Introduction} 
\label{intro}
A fundamental and well-studied problem is that of a Bloch electron subjected to an electric field and a constant magnetic field where the dynamics is generated by 
\begin{align}
	\hat{H} \equiv \hat{H}(\eps,\lambda) := \tfrac{1}{2} \bigl ( -i \nabla_x - \lambda A(\eps x) \bigr )^2 + V_\Gamma(x) + \phi(\eps x) 
	\label{intro:eqn:full_hamiltonian}
\end{align}
acting on $L^2(\R^d_x)$. Here, $\eps$ and $\lambda$ are dimensionless nonnegative parameters whose significance will be discussed momentarily.  $A$ is assumed to be \emph{a} smooth polynomially bounded vector potential to the magnetic field $B = \dd A$ which is constant in time and uniform in space. Since the magnetic field is uniform, $A$ needs to grow at least linearly. The potential generated by the nuclei and all other electrons $V_{\Gamma}$ is periodic with respect to the crystal lattice \cite{Cances_Deleurence_Lewin:modelling_local_defects_crystals:2008,Cances_Deleurence_Lewin:non-perurbative_defects_crystal:2008}
\begin{align}
	\Gamma := \Bigl \{ \gamma \in \R^d \; \vert \; \mbox{$\gamma = \sum_{j = 1}^d \alpha_j e_j$} , \; \alpha_j \in \Z \Bigr \}
	\label{intro:eqn:lattice}
\end{align}
where the family of vectors $\{ e_1 , \ldots , e_d \}$ which defines the lattice forms a basis of $\R^d$. The potential assumed to be infinitesimally bounded with respect to $- \tfrac{1}{2} \Delta_x$. By Theorem~XIII.96 in \cite{Reed_Simon:M_cap_Phi_4:1978}, this is ensured by the following 
\begin{assumption}[Periodic potential]\label{intro:assumption:V_Gamma}
	We assume that $V_{\Gamma}$ is $\Gamma$-periodic, \ie $V_{\Gamma}(\cdot + \gamma) = V_{\Gamma}(\cdot)$ for all $\gamma \in \Gamma$, and $\int_{\WS} \dd y \, \abs{V_{\Gamma}(y)} < \infty$. 
\end{assumption}
The dual lattice $\Gamma^*$ is spanned by the \emph{dual basis} $\{e_1^\ast,\ldots,e_d^\ast \}$, \ie the set of vectors which satisfy $e_j \cdot e_k^{\ast} = 2 \pi \delta_{kj}$. The assumption on $V_{\Gamma}$ ensures the unperturbed periodic hamiltonian 
\begin{align}
	\hat{H}_{\mathrm{per}} = \tfrac{1}{2} (- i \nabla_x)^2 + V_{\Gamma}
	\label{intro:eqn:per_ham}
\end{align}
defines a selfadjoint operator on the second Sobolev space $H^2(\R^d)$ and gives rise to Bloch bands in the usual manner (cf.~Section~\ref{rewriting:Zak}): the unitary Bloch-Floquet-Zak transform $\Zak$ defined by equation~\eqref{rewrite:Zak:eqn:Zak_transform} decomposes $\hat{H}_{\mathrm{per}}$ into the \emph{fibered} operator 
\begin{align*}
	 \hat{H}_{\mathrm{per}}^{\Zak} := \Zak \hat{H}_{\mathrm{per}} \Zak^{-1} = \int_{\BZ}^{\oplus} \dd k \, \Hper^{\Zak}(k) := \int_{\BZ}^{\oplus} \dd k \, \Bigl ( \tfrac{1}{2} \bigl ( - i \nabla_y + k \bigr )^2 + V_{\Gamma}(y) \Bigr )
\end{align*}
where we have introduced the Brillouin zone 
\begin{align}
	\BZ := \Bigl \{ k \in \R^d \; \vert \; \mbox{$k = \sum_{j = 1}^d \alpha_j e^\ast_j$} , \; \alpha_j \in [-\nicefrac{1}{2},+\nicefrac{1}{2}] \Bigr \}
	\label{intro:eqn:BZ} 
\end{align}
as fundamental cell in reciprocal space. For each $k \in \BZ$, the eigenvalue equation 
\begin{align*}
	\Hper(k) \varphi_n(k) = E_n(k) \, \varphi_n(k) 
	, 
	&& 
	\varphi_n(k) \in L^2(\T^d_y) 
	, 
\end{align*}
where $\T^d_y := \R^d / \Gamma$, is solved by the Bloch function associated to the $n$th band. Assume for simplicity we are given a band $\Eb$ which does not intersect or merge with other bands (\ie there is a \emph{local} gap in the sense of Assumption~\ref{mag_sapt:mag_wc:defn:gap_condition}). Then common lore is that transitions to other bands are exponentially suppressed and the effective dynamics for an initial state localized in the eigenspace associated to $\Eb$ is generated by $\Eb(- i \nabla_x)$ \cite{Grosso_Parravicini:solid_state_physics:2003,Ashcroft_Mermin:solid_state_physics:2001}. 
\medskip

\noindent
If we switch on a constant magnetic field, no matter how weak, the Bloch bands are gone as there is no \BZak decomposition with respect to $\Gamma$ for hamiltonian~\eqref{intro:eqn:full_hamiltonian}. As a matter of fact, the spectrum of $\hat{H}$ is a Cantor set \cite{Gruber:noncommutative_Bloch:2001} if the flux through the Wigner-Seitz cell 
\begin{align}
	\WS := \Bigl \{ y \in \R^d \; \vert \; \mbox{$y = \sum_{j = 1}^d \alpha_j e_j$} , \; \alpha_j \in [-\nicefrac{1}{2},+\nicefrac{1}{2}] \Bigr \}
	\label{intro:eqn:WS} 
\end{align}
is irrational. Even if the flux through the unit cell is rational, we recover only \emph{magnetic} Bloch bands that are associated to a larger lattice $\Gamma' \supset \Gamma$. A natural question is if it is at all possible to see signatures of \emph{nonmagnetic} Bloch bands if the applied magnetic field is weak? 

Our main result, Theorem~\ref{effective_quantum_dynamics:effective_dynamics:thm:adiabatic_decoupling}, answers this question in the positive in the following sense: if the electromagnetic field varies on the macroscopic level, \ie $\eps \ll 1$, then to leading order the dynamics is still generated by the so called \emph{Peierls substitution} $\Eb \bigl ( -i \nabla_x - \lambda A(\eps x) \bigr ) + \phi \bigl (\eps x \bigr )$ (defined as a \emph{magnetic pseudodifferential operator} through equation~\eqref{rewrite:magnetic_weyl_calculus:eqn:mag_wq_torus}, cf.~Section~\ref{rewriting:magnetic_weyl_calculus}). Hence, the dynamics are dominated by the Bloch bands even in the presence of a weak, but \emph{constant} magnetic field. Furthermore, we can derive corrections to any order in $\eps$ in terms of Bloch bands, Bloch functions, the magnetic \emph{field} and the electric \emph{potential}. We do not need to choose a ``nice'' vector potential for $B$, in fact, in all of the calculations only the magnetic field $B$ enters. Existing theory is ill-equipped to deal with constant or even more general magnetic fields. We tackle this obstacle by incorporating the rich theory of magnetic Weyl calculus \cite{Mantoiu_Purice:magnetic_Weyl_calculus:2004,Iftimie_Mantiou_Purice:magneticPseudodifferentialOperators:2005} with \emph{two} small parameters \cite{Lein:two_parameter_asymptotics:2008} into space-adiabatic perturbation theory \cite{PST:sapt:2002,PST:effDynamics:2003,Teufel:adiabaticPerturbationTheory:2003}. 

Magnetic Weyl calculus does not single out constant magnetic fields, in fact we will only assume that the components of the magnetic field $B$ are bounded with bounded derivatives to any order. 
\begin{assumption}[Electromagnetic fields]\label{intro:assumption:em_fields}
	We assume that the components of the external (macroscopic) magnetic fields $B$ and the electric potential $\phi$ are $\BCont^{\infty}(\R^d)$ functions, \ie smooth, bounded functions with bounded derivatives to any order. 
\end{assumption}

\begin{remark}
	All vector potentials $A$ associated to magnetic fields $B = \dd A$ with components in $\BCont^{\infty}(\R^d)$ are always assumed to have components in $\Cont^{\infty}_{\mathrm{pol}}(\R^d)$. This is always possible as one could pick the transversal gauge, 
	\begin{align*}
		A_k(x) := - \sum_{j = 1}^n \int_0^1 \dd s \, B_{kj}(s x) \, s x_j 
		. 
	\end{align*}
	This is to be contrasted with the original work of Panati, Spohn and Teufel where the \emph{vector potential} had to have components in $\BCont^{\infty}(\R^d)$. 
\end{remark}
Under Assumptions~\ref{intro:assumption:V_Gamma} and~\ref{intro:assumption:em_fields}, $\hat{H}$ defines an essentially selfadjoint operator on $\Cont^{\infty}_0(\R^d_x) \subset L^2(\R^d_x)$. 

The extension of the rigorous derivation of the Peierls substitution to the case of constant magnetic field is not only an ``accademic  result,'' it is crucial for modeling the Quantum Hall effect (QHE): the Peierls substitution can be used to link the QHE to the well studied Harper equation \cite{Harper:single_band:1955,Bellissard:Cstar_algebras_solid_state_physics:1988,Helffer_Sjoestrand:semiclassics_Harper_equation_1:1988,Helffer_Sjoestrand:semiclassics_Harper_equation_2:1990,Helffer_Sjoestrand:semiclassics_Harper_equation_3:1989}. 
In particular  the Peierls substitution led Hofstadter to study a simplified tight binding model for the QHE today called the Hofstadter model \cite{Hofstadter:Bloch_electron_magnetic_fields:1976}. The Hofstadter model is a paradigm in the study of fractal spectra (Hofstadter butterfly) and was used by Thouless et al{.} in the seminal paper \cite{Thouless_Kohmoto_Nightingale_Den_Nijs:quantized_hall_conductance:1982} to give the first theoretical explanation of the topological quantization of the QHE. 
More recently Avron \cite{Avron_Osadchy:Hofstadter_butterfly:2001} interpreted the results by Thouless et al{.} from the viewpoint of thermodynamics and connected the QHE to anomalous phase transition diagrams (colored quantum butterflies). Even though this list of publications is very much incomplete, it shows the significance of the Peierls substitution in the study of the QHE. The main merit of this paper is to provide the first rigorous proof of the Peierls substitution under the conditions relevant to the QHE and consequently the first rigorous justification for the use of the Hofstadter model as weak field limit for the analysis of the QHE.

Let us now explain why we have chosen to include the additional parameter $\lambda$ in the hamiltonian $\hat{H} \equiv \hat{H}(\eps,\lambda)$. Our proposal is to model an experimental setup that applies an external, \ie macroscopic electric and magnetic field. The parameter $\eps \ll 1$ relates the microscopic scale as given by the crystal lattice to the scale on which the external fields vary. \emph{We always assume $\eps$ to be small.} It is quite easy to fathom an apparaturs where electric and magnetic field can be regulated separately by, say, two dials. We are interested in the case where we can \emph{selectively switch off the magnetic field}. If we regulate the strength of the magnetic field by varying the relative amplitude $\lambda \leq 1 $ which quantifies the ratio between scaled electric and magnetic field, 
\begin{align*}
	B^{\eps,\lambda}(x) := \eps \lambda B(\eps x),\qquad\qquad 
	\El^{\eps}(x) := \eps \El(\eps x) 
	. 
\end{align*}
We emphasize that $\lambda$ need not be small, as a matter of fact, $\lambda = 1$ is perfectly admissible. In this sense, $\lambda$ is a perturbative parameter which allows us to take the limit $B^{\eps,\lambda} \rightarrow 0$ without changing the external electric field $\El^{\eps}$. 
This is very much relevant to experiments since magnetic fields are typically much weaker than electric fields and thus both, from a physics and a mathematics perspective, the study of the dynamics under the $\lambda \rightarrow 0$ limit is an interesting problem which merits further research. 
\medskip

\noindent
The aim of this review is to show how recent advances in the theory of \emph{magnetic pseudo\-differential operators} (magnetic $\Psi$DOs) (see~\cite{Iftimie_Mantiou_Purice:magneticPseudodifferentialOperators:2005,Lein:two_parameter_asymptotics:2008} as well as section~\ref{rewriting:magnetic_weyl_calculus}) can be used to extend the range of validity of the results of Panati, Spohn and Teufel derived via space-adiabatic perturbation theory (SAPT) \cite{PST:effDynamics:2003} to magnetic fields with components in $\BCont^{\infty}(\R^d)$. The original proof uses standard pseudodifferential techniques and thus is limited to magnetic \emph{vector potentials} of class $\BCont^{\infty}(\R^d)$. In a recent work by one of the authors with Panati \cite{DeNittis_Panati:Bloch_electron:2010}, adiabatic decoupling for the Bloch electron has been proven in the case of constant magnetic field. Their proof rests on a particular choice of gauge, namely the symmetric gauge. We take a different route: according to the philosophy of magnetic Weyl calculus, it is the properties of the magnetic field and not those of the vector potentials which enter the hypotheses of theorems. 

As the proofs in \cite{PST:effDynamics:2003} carry over \emph{mutatis mutandis}, we feel it is more appropriate to elucidate the structure of the problem and mention the necessary modifications in proofs when necessary. 
\medskip

\noindent
Our paper is divided in 5 sections: in \textbf{Section~\ref{rewriting}}, we will decompose the hamiltonian using the Bloch-Floquet-Zak transform and rewrite it as \emph{magnetic} Weyl quantization of an operator-valued function. \textbf{Section~\ref{mag_sapt}} contains a comprehensive description of our technique of choice, SAPT. The main results, adiabatic decoupling to all orders and a semiclassical limit, will be stated and proven in \textbf{Section~\ref{effective_quantum_dynamics}} and \textbf{\ref{eom}}.

\paragraph{Acknowledgements} 
\label{intro:acknowledgements}
The authors would like to thank M.~Măntoiu, G.~Panati and H.~Spohn for useful discussions. M.~L. thanks G.~Panati for initiating the scientific collaboration with M.~Măntoiu. Furthermore, the authors have found the suggestions by one of the referees very useful. 



\section{Rewriting the problem} 
\label{rewriting}
As a preliminary step, we will rewrite the problem: first, we extract the Bloch band picture via the \BZak transform and then we reinterpret the BFZ-transformed hamiltonian as magnetic quantization of an operator-valued symbol. We insist we only rephrase the problem, \emph{no additional assumptions} are introduced.

\subsection{The Bloch-Floquet-Zak transform} 
\label{rewriting:Zak}
Usually, one would exploit lattice periodicity by going to the Fourier basis: each $\Psi \in \Schwartz(\R^d_x) \subset L^2(\R^d_x)$ is mapped onto 
\begin{align*}
	( \mathcal{F} \Psi )(k,y) := \sum_{\gamma \in \Gamma} e^{- i k \cdot y} \, \Psi(y + \gamma) 
\end{align*}
and the corresponding representation is usually called Bloch-Floquet representation (see \eg \cite{Kuchment:Floquet_theory:1993}). It is easily checked that 
\begin{align*}
	( \mathcal{F} \Psi )(k - \gamma^*,y) &= ( \mathcal{F} \Psi )(k,y) 
	&& \forall \gamma^* \in \Gamma^* 
	\\
	( \mathcal{F} \Psi )(k,y - \gamma) &= e^{- i k \cdot \gamma} ( \mathcal{F} \Psi )(k,y) 
	&& \forall \gamma \in \Gamma 
\end{align*}
holds and $\mathcal{F} \Psi$ can be written as 
\begin{align*}
	( \mathcal{F} \Psi )(k,y) = e^{i k \cdot y} \, u(k,y)
\end{align*}
where $u(k,y)$ is $\Gamma$-periodic in $y$ and $\Gamma^*$-periodic up to a phase in $k$. For technical reasons, we prefer to use a variant of the Bloch-Floquet transform introduced by Zak \cite{Zak:dynamics_Bloch_electrons:1968} which maps $\Psi \in \Schwartz(\R^d_x)$ onto u, 
\begin{align}
	( \Zak \Psi )(k,y) := \sum_{\gamma \in \Gamma} e^{- i k \cdot (y + \gamma)} \, \Psi(y + \gamma) 
	\label{rewrite:Zak:eqn:Zak_transform} 
	. 
\end{align}
The \BZak transform has the following periodicity properties: 
\begin{align}
	( \Zak \Psi ) (k - \gamma^*,y) &= e^{+ i \gamma^* \cdot y} \, ( \Zak \Psi ) (k,y) 
	=: \tau(\gamma^*) \, ( \Zak \Psi ) (k,y) 
	&& \forall \gamma^* \in \Gamma^* 
	\\ 
	( \Zak \Psi ) (k,y - \gamma) &= ( \Zak \Psi ) (k,y) 
	&& \forall \gamma \in \Gamma
	\notag 
\end{align}
$\tau$ is a unitary representation of the group of dual lattice translations $\Gamma^*$. By density, $\Zak$ immediately extends to $L^2(\R^d_x)$ and it maps it unitarily onto 
\begin{align}
	\Htau := \Bigl \{ \psi \in L^2_{\mathrm{loc}} \bigl ({\R^d_k},L^2(\T^d_y) \bigr ) \; \big \vert \;  \psi (k-\gamma^\ast) = \tau(\gamma^\ast) \, \psi(k) \mbox{ a.~e. } \forall \gamma^* \in \Gamma^* \Bigr \} 
	\label{rewriting:Zak:eqn:defnition_Htau} 
	, 
\end{align}
which is equipped with the scalar product 
\begin{align*}
	\bscpro{\varphi}{\psi}_{\tau} := \int_{\BZ} \dd k \, \bscpro{\varphi(k)}{\psi(k)}_{L^2(\T^d_y)} 
	. 
\end{align*}
It is obvious from the definition that the left-hand side does not depend on the choice of the unit cell $\BZ$ in reciprocal space. The \BZak representation of momentum $-i \nabla_x$ and position operator $\hat{x}$ on $L^2(\R^d_x)$, equipped with the obvious domains, can be computed directly, 
\begin{align}
	\Zak (- i \nabla_x) \Zak^{-1} &= \id_{L^2(\BZ)} \otimes (- i \nabla_y) + \hat{k} \otimes \id_{L^2(\T^d_y)} 
	\equiv -i \nabla_y + k 
	\label{rewrite:Zak:eqn:Zak_transformed_building_blocks}
	\\
	\Zak \hat{x} \Zak^{-1} &= i \nabla_k^{\tau} 
	, 
	\notag 
\end{align}
where we have used the identification $\Htau \cong L^2(\BZ) \otimes L^2(\T^d_y)$. The superscript $\tau$ on $i \nabla_k^{\tau}$ indicates that the operator's domain $\Htau \cap H^1_{\mathrm{loc}} \bigl ( \R^d , L^2(\T^d_y) \bigr )$ consists of $\tau$-equivariant functions. The \BZak transformed domain for momentum $-i \nabla_y + k$ is $L^2(\BZ) \otimes H^1(\T^d_y)$. Since the phase factor $\tau$ depends on $y$, the \BZak transform of $\hat{x}$ does not factor --- unless we consider $\Gamma$-periodic functions, then we have 
\begin{align*}
	\Zak V_{\Gamma}(\hat{x}) \Zak^{-1} = \id_{L^2(\BZ)} \otimes V_{\Gamma}(\hat{y}) \equiv V_{\Gamma}(\hat{y}) 
	. 
\end{align*}
Equations~\eqref{rewrite:Zak:eqn:Zak_transformed_building_blocks} immediately give us the \BZak transform of $\hat{H}$, namely 
\begin{align}
	\hat{H}^{\Zak} := \Zak \hat{H} \Zak^{-1} = \tfrac{1}{2} \bigl ( -i \nabla_y + k - \lambda A(i \eps \nabla_k^{\tau}) \bigr )^2 + V_{\Gamma}(\hat{y}) + \phi(i \eps \nabla_k^{\tau}) 
	\label{rewrite:Zak:eqn:Zak_transformed_hamiltonian} 
	, 
\end{align}
which defines an essentially selfadjoint operator on $\Zak \Cont^{\infty}_0(\R^d_x)$. If the external electromagnetic field vanishes, the hamiltonian 
\begin{align}
	\hat{H}_{\mathrm{per}}^{\Zak} := \Zak \hat{H}_{\mathrm{per}} \Zak^{-1} = \int_{\BZ}^{\oplus} \dd k \, \Hper^{\Zak}(k) 
\end{align}
fibers into a family of operators on $L^2(\T^d_y)$ indexed by crystal momentum $k \in \BZ$. $\tau$-equivariance relates $\Hper^{\Zak}(k - \gamma^*)$ and $\Hper^{\Zak}(k)$ via 
\begin{align*}
	\Hper^{\Zak}(k - \gamma^*) = \tau(\gamma^*) \, \Hper^{\Zak}(k) \, \tau(\gamma^*)^{-1} 
	&& \forall \gamma^* \in \Gamma^* 
\end{align*}
which, among other things, ensures that Bloch bands $\{ E_n \}_{n \in \N}$, \ie the solutions to the eigenvalue equation 
\begin{align*}
	\Hper^{\Zak}(k) \varphi_n(k) = E_n(k) \, \varphi_n(k) 
	, 
	&& \varphi_n(k) \in L^2(\T^d_y)
	, 
\end{align*}
are $\Gamma^*$-periodic functions. Standard arguments show that $\Hper^{\Zak}(k)$ has purely discrete spectrum for all $k \in \BZ$ and if Bloch bands are ordered by magnitude, they are smooth functions away from band crossings. Similarly, the Bloch functions $k \mapsto \varphi_n(k)$ are smooth if the associated energy band $E_n$ does not intersect with or touch others \cite{Reed_Simon:M_cap_Phi_4:1978}. 

The next subsection shows that the effect of introducing an external electromagnetic field can be interpreted as ``replacing'' the direct integral with the \emph{magnetic} quantization of $\Hper^{\Zak} + \phi$. 


\subsection{Magnetic $\Psi$DO and Weyl calculus} 
\label{rewriting:magnetic_weyl_calculus}
Instead of using regular Weyl calculus, we use a more sophisticated Weyl calculus that is adapted to magnetic problems. It has first been proposed by Müller in 1999 \cite{Mueller:productRuleGaugeInvariantWeylSymbols:1999} in a non-rigorous fashion. Independently, Măntoiu and Purice \cite{Mantoiu_Purice:magnetic_Weyl_calculus:2004} as well as Iftimie, Măntoiu and Purice \cite{Iftimie_Mantiou_Purice:magneticPseudodifferentialOperators:2005} have laid the mathematical foundation. All of the main results of ordinary Weyl calculus have been transcribed to the magnetic context; for details, we refer to the two aforementioned publications and those we give in the remainder of this section.

\subsubsection{Ordinary magnetic Weyl calculus} 
\label{rewriting:magnetic_weyl_calculus:mag_wc_on_Rd}
The basic building blocks of magnetic pseudodifferential operators are 
\begin{align}
	\PA \equiv \mathsf{P}^A_{\eps,\lambda} &:= - i \nabla_x - \lambda A(\Qe) 
	\\
	\Qe \equiv \Qe_{\eps} &:= \eps \hat{x} 
	\notag 
	. 
\end{align}
With this notation, $\opHamiltonian$ can be written in terms of $\PA$ and $\Qe$ as the quantization of 
\begin{align}
	\Hamiltonian(x,\xi) := \tfrac{1}{2} \xi^2 + V_{\Gamma} (\nicefrac{x}{\eps}) + \phi(x) 
	, 
\end{align}
\ie $\opHamiltonian = \Hamiltonian(\Qe,\PA)$. As $\Hamiltonian$ is the sum of a contribution \emph{quadratic} in momentum and a contribution depending only on $x$, this prescription is unambiguous. However, not all objects (\eg resolvents and projections) we will encounter are of this type. We need a functional calculus for the non-commuting family of operators $\Qe$ and $\PA$ of noncommutative observables that are characterized by the commutation relations 
\begin{align}
	i \bigl [ \Qe_l , \Qe_j \bigr ] = 0 
	&&
	i \bigl [ \Qe_l , \PA_j \bigr ] = \eps \, \delta_{lj} 
	&&
	i \bigl [ \PA_l , \PA_j \bigr ] = \eps \lambda B_{lj}(\Qe) 
	\label{rewriting:magnetic_weyl_calculus:eqn:comm_rel_Rd} 
	. 
\end{align}
The commutation relations can be rigorously implemented via the Weyl system 
\begin{align*}
	\WeylSys^A(x,\xi) := e^{- i (\xi \cdot \Qe - x \cdot \PA)} =: e^{- i \sigma((x,\xi),(\Qe,\PA))}
\end{align*}
where $A$ is a smooth, polynomially bounded vector potential associated to a magnetic field with components in $\BCont^{\infty}(\R^d)$ and $\sigma \bigl ( (x,\xi) , (y,\eta) \bigr ) := \xi \cdot y - x \cdot \eta$ is the non-magnetic symplectic form. We will also introduce the symplectic Fourier transform 
\begin{align*}
	(\Fs f)(x,\xi) := \frac{1}{(2\pi)^d} \int_{\R^d_y} \dd y \int_{\R^d_{\eta}} \dd \eta \, e^{ i \sigma((x,\xi),(y,\eta))} \, f(y,\eta) 
	, 
	&& 
	f \in \Schwartz(T^* \R^d_x) 
	, 
\end{align*}
which is also its own inverse on $\Schwartz(T^* \R^d_x)$ and extends to a continuous bijection on the space of tempered distributions $\Schwartz'(T^* \R^d_x)$. The Weyl quantization of $h \in \Schwartz(T^* \R^d_x)$ given by 
\begin{align}
	\Opx^A(h) := \frac{1}{(2\pi)^d} \int_{\R^d_x} \dd x \int_{\R^d_{\xi}} \dd \xi \, (\Fs h)(x,\xi) \, \WeylSys^A(x,\xi) 
	\label{rewriting:magnetic_weyl_calculus:eqn:OpxA}
\end{align}
defines a \emph{magnetic} $\Psi$DO. Associated to this, we have a product $\magW$ akin to the usual Moyal product which emulates the product of magnetic operators on the level of functions on phase space, $\Opx^A \bigl ( f \magW g \bigr ) = \Opx^A(f) \, \Opx^A(g)$. For suitable functions $f , g : T^* \R^d_x \longrightarrow \C$, \eg Hörmander-class symbols, their magnetic product is given by the oscillatory integral
\begin{align}
	(f \magW g)(x,\xi) &= \frac{1}{(2\pi)^{2d}} \int_{\R^d_y} \dd y \int_{\R^d_{\eta}} \dd \eta \int_{\R^d_z} \dd z \int_{\R^d_{\zeta}} \dd \zeta \, e^{i \sigma((x,\xi),(y,\eta) + (z,\zeta))} \, e^{i \frac{\eps}{2} \sigma((y,\eta),(z,\zeta))} 
	\cdot \notag \\
	&\qquad \qquad \qquad \qquad \cdot
	e^{- i \eps \gamma^B_{\eps}(x,y,z)} \, (\Fs f)(y,\eta) \, (\Fs g)(z,\zeta) 
	\label{rewriting:magnetic_weyl_calculus:eqn:magnetic_product}
\end{align}
where $\gamma^B_{\eps}(x,y,z)$ is the scaled magnetic flux through a triangle whose corners depend on $x$, $y$ and $z$. In \cite{Lein:two_parameter_asymptotics:2008} it was shown that for Hörmander-class symbols, this product has an asymptotic development in $\eps$ and $\lambda$, 
\begin{align}
	f \magW g = f \, g - \eps \tfrac{i}{2} \{ f , g \}_{\lambda B} + \order(\eps^2) 
	, 
	\label{rewriting:magnetic_weyl_calculus:eqn:magnetic_product_expansion}
\end{align}
where 
\begin{align*}
	\{ f , g \}_{\lambda B} := \sum_{l = 1}^d \bigl ( \partial_{\xi_l} f \, \partial_{x_l} g - \partial_{x_l} f \, \partial_{\xi_l} g \bigr ) - \lambda \sum_{l , j = 1}^d B_{lj} \, \partial_{\xi_l} f \, \partial_{\xi_j} g 
\end{align*}
is the magnetic Poisson bracket. In the limit $\lambda \rightarrow 0$ it reduces to the standard Poisson bracket. The crucial fact that this product depends on the magnetic field $B$ rather than the vector potential $A$ can be traced back to the \emph{gauge-covariance} of magnetic Weyl quantization: if $A' = A + \dd \chi$ is an equivalent vector potential, $\dd A = B = \dd A'$, then the quantizations with respect to either vector potential are unitarily equivalent, 
\begin{align}
  \Opx^{A + \dd \chi}(h) = e^{+ i \lambda \chi(\Qe)} \, \Opx^A(h) \, e^{- i \lambda \chi(\Qe)}
  .
\end{align}
\emph{This is generically false if we quantize $h_A(x,\xi) := h \bigl ( x , \xi - \lambda A(x) \bigr )$ via usual, non-magnetic Weyl quantization, $\Opx_A(h) := \Opx(h_A)$.}\footnote{Coincidentally, the quantization of polynomials of degree $\leq 2$ in momentum with respect to $\Opx_A$ \emph{are} covariant. Bloch bands, however, are not quadratic functions -- as are all the other objects (terms of the development of the projection and the unitary) involved in this paper. } Otherwise, the properties of the relevant pseudodifferential operators were to depend on the choice of gauge. Fortunately, the difference starts to appear at second order in $\eps$ \cite[Section~1.1.2]{Lein:two_parameter_asymptotics:2008}, so we generically expect results derived by nonmagnetic Weyl calculus to agree to first order. However, there is a second advantage of magnetic Weyl calculus (beyond being more natural): we can treat more general magnetic fields as properties of $B$ enter rather than those of $A$ --- and associated vector potentials are always worse behaved than the magnetic field. With some effort, one can treat magnetic fields which admit a vector potentials whose derivatives are all bounded, $\babs{\partial^a_x A(x)} \leq C_a$ for all $a \in \N_0^d$, $\abs{a} \geq 1$,  see~\cite[Section~5]{Iftimie_Mantiou_Purice:magneticPseudodifferentialOperators:2005}, for instance, but with magnetic Weyl calculus we are instantly able to treat magnetic fields whose components are $\BCont^{\infty}(\R^d)$ functions with \emph{zero extra effort}. The limitation to this class of fields is due to the fact that we are interested in Hörmander class symbols which need to be bounded in the $x$ variable. 
In fact, the extension of the results of Panati, Spohn and Teufel to magnetic fields of class $\BCont^{\infty}$ is our main motivation as explained in the introduction. Covariance ensures that our results do not depend on the choice of a nice or symmetric gauge, any vector potential that is a smooth, polynomially bounded function will do.

Many standard results of pseudodifferential theory have been transcribed to the magnetic case for a large class of magnetic fields: typically, it is either assumed that the components of $B$ are $\Cont^{\infty}_{\mathrm{pol}}(\R^d)$ or $\BCont^{\infty}(\R^d)$ functions, although we shall always assume the latter. The quantization and dequantization have been extended to tempered distributions and it was shown that smooth, uniformly polynomially bounded functions on phase space $\Cont^{\infty}_{\mathrm{pol} \, u}(T^* \R^d)$ are among those with good composition properties \cite[Proposition~23]{Mantoiu_Purice:magnetic_Weyl_calculus:2004}.  Hörmander symbols are preserved under the magnetic Weyl product and quantizations of real-valued, elliptic Hörmander symbols of positive order $m$ define selfadjoint operators on the $m$th magnetic Sobolev space \cite{Iftimie_Mantiou_Purice:magneticPseudodifferentialOperators:2005}. A magnetic version of the Caldéron-Vaillancourt theorem \cite{Iftimie_Mantiou_Purice:magneticPseudodifferentialOperators:2005} and commutator criteria \cite{MantoiuPurice:BealsCriterion:2008} show the interplay between properties of magnetic pseudodifferential operators and their associated symbols. 

Lastly, we mention something that will be important in the next section: the magnetic Weyl quantization is a position representation for a magnetic pseudodifferential operator. However, equivalently, we can use the momentum representation where $x$ is quantized to $\Qe':= i \eps \nabla_{\xi}$ and $\xi$ to ${\mathsf{P}'}^A := \hat{\xi} - \lambda A(i \eps \nabla_{\xi})$. The commutation relations of the building block operators in momentum representation are again encoded into the Weyl system 
\begin{align*}
	\WeylSys^{\prime \, A}(x,\xi) := e^{- i \sigma((x,\xi),(\Qe',{\mathsf{P}'}^A))} = \mathfrak{F} \WeylSys^A(x,\xi) \mathfrak{F}^{-1}
\end{align*}
which is related to the Weyl system in momentum representation via the Fourier transform $\mathfrak{F} : L^2(\R^d_x) \longrightarrow L^2(\R^d_{\xi})$. If $\Opx^{\prime \, A}$ is the quantization associated to the Weyl system $\WeylSys^{\prime \, A}$, then $\Opx^{\prime \, A}$ and $\Opx^A$ are related via $\mathfrak{F}$ as well. An important consequence is that the Weyl product is independent of the choice of representation: for suitable distributions $f$ and $g$, we conclude that the Weyl products must agree:
\begin{align*}
	\Opx^{\prime \, A} (f \Weyl^{\prime \, B}_{\eps,\lambda} g) &= \Opx^{\prime \, A}(f) \, \Opx^{\prime \, A}(g)
	\\
	&= \mathfrak{F} \, \Opx^A(f) \, \mathfrak{F}^{-1} \mathfrak{F} \, \Opx^A(g) \, \mathfrak{F}^{-1} = \mathfrak{F} \, \Opx^A(f \magWel g) \, \mathfrak{F}^{-1} 
	\\
	&= \Opx^{\prime \, A}(f \magWel g) 
\end{align*}
This is related to an algebraic point of view proposed in \cite{MantoiuPuriceRichard:twistedXProducts:2004} and elaborated upon in \cite{Lein_Mantoiu_Richard:anisotropic_mag_pseudo:2009}: the choice of Hilbert space in this case can be seen as a choice of (equivalent) representation of more fundamental $C^*$-algebras of distributions. Properties such as boundedness of the quantization of certain classes of distributions and the form of the composition law are preserved if we choose a unitarily equivalent representation. In fact, we could have replaced $\mathfrak{F}$ in the above argument by any other unitary operator $\mathcal{U} : L^2(\R^d_x) \longrightarrow \mathcal{H}$ where the target space $\mathcal{H}$ is again a separable Hilbert space. Gauge transformations are another particular example of unitarities which connect equivalent representations. 


\subsubsection{Equivariant magnetic Weyl calculus} 
\label{rewriting:magnetic_weyl_calculus:mag_wc_on_Td}
For technical reasons, we must adapt magnetic Weyl calculus to deal with equivariant, unbounded operator-valued functions. We follow the general strategy outlined in \cite{PST:effDynamics:2003}, but we need to be more careful as the roles of $\Qe$ and $\PA$ are not interchangeable if $B \neq 0$. We would like to reuse results for Weyl calculus on $T^* \R^d$ -- in particular, the two-parameter expansion of the product (equation~\eqref{rewriting:magnetic_weyl_calculus:eqn:magnetic_product_expansion}). Consider the building block kinetic operators \emph{macroscopic position} $\Reps$ and \emph{magnetic crystal momentum} $\KA$, 
\begin{align}
	\Reps &= i \eps \nabla_k \otimes \id_{L^2(\T^d_y)} \equiv i \eps \nabla_k 
	\label{rewriting:magnetic_weyl_calculus:eqn:building_blocks_Td} 
	\\
	\KA &= \hat{k} - \lambda A(\Reps) 
	\notag 
	, 
\end{align}
in momentum representation: they define selfadjoint operators whose domains are dense in $L^2_{\tau'} \bigl ( \R^d_k , L^2(\T^d_y) \bigr )$ where $\tau'$ stands for either $\tau : \gamma^* \mapsto e^{- i \gamma^* \cdot \hat{y}}$ or $1 : \gamma^* \mapsto 1$. The elements of this Hilbert space can be considered as vector-valued tempered distributions with special properties as $L^2_{\tau'} \bigl ( \R^d_k , L^2(\T^d_y) \bigr )$ can be continuously embedded into $\Schwartz' \bigl ( \R^d_k , L^2(\T^d_y) \bigr )$. For simplicity, let us ignore questions of domains and assume that $h \in \BCont^{\infty} \bigl ( T^* \R^d_x , \mathcal{B} \bigl ( L^2(\T^d_y) \bigr ) \bigr )$ is a \emph{bounded operator}-valued function. Then its magnetic Weyl quantization 
\begin{align}
	\Opk^A (h) := \frac{1}{(2\pi)^d} \int_{\R^d_r} \dd r \int_{\R^d_k} \dd k \, (\Fs h)(k,r) \, \WeylSys^A(k,r) 
	\label{rewrite:magnetic_weyl_calculus:eqn:mag_wq_torus}
\end{align}
defines a continuous operator from $\Schwartz \bigl ( \R^d_k , L^2(\T^d_y) \bigr )$ to itself which has a continuous extension as an operator from $\Schwartz' \bigl ( \R^d_k , L^2(\T^d_y) \bigr )$ to itself \cite[Proposition~21]{Mantoiu_Purice:magnetic_Weyl_calculus:2004}. Here, the corresponding Weyl system 
\begin{align*}
	\WeylSys^A(k,r) := e^{- i \sigma((r,k),(\Reps,\KA))} \otimes \id_{L^2(\T^d_y)} \equiv e^{- i (k \cdot \Reps - r \cdot \KA)}
\end{align*}
is defined in terms of the building block operators $\KA$ and $\Reps$ and acts trivially on $L^2(\T^d_y)$. The Weyl product $f \magW g$ of two suitable distributions associated to the quantization $\Opk^A$ is also given by a suitable reinterpretation of equation~\eqref{rewriting:magnetic_weyl_calculus:eqn:magnetic_product} as $f$ and $g$ are now operator-valued functions. Furthermore, we can also develop $f \magW g$ asymptotically in $\eps$ and $\lambda$ \cite[Theorem~1.1]{Lein:two_parameter_asymptotics:2008}. 
To see this, we remark that the difference between the products associated to $\Opx^A$ and $\Opk^A$ is two-fold: first of all, $\Opx^A$ is a position representation while $\Opk^A$ is a momentum representation. Let $\Opk^{\prime \, A}$ be the magnetic Weyl quantization defined with respect to $\Reps' := \mathfrak{F}^{-1} \Reps \mathfrak{F} = \eps \hat{r}$ and $\mathsf{K}^{\prime \, A} := \mathfrak{F}^{-1} \KA \mathfrak{F} = - i \nabla_r - \lambda A(\eps \hat{r})$, \ie the position representation. As explained at the end of the previous subsection, unitarily equivalent representations, here $\Opk^A$ and $\Opk^{\prime \, A}$, have the same Weyl product. 

Secondly, the functions which are to be quantized by $\Opx^A$ and $\Opk^A$ take values in $\C$ and the bounded operators on $L^2(\T^d_y)$, respectively. The interested reader may check the proofs regarding the various properties of the product $\magW$ in \cite{Mantoiu_Purice:magnetic_Weyl_calculus:2004,Iftimie_Mantiou_Purice:magneticPseudodifferentialOperators:2005} and \cite{Lein:two_parameter_asymptotics:2008} can be generalized to accommodate operator-valued functions, including Hörmander symbols. 
\begin{definition}[Hörmander symbols $\Hoerr{m}{\rho} \bigl ( \mathcal{B}(\Hil_1,\Hil_2) \bigr )$]
	Let $m \in \R$, $\rho \in [0,1]$ and $\Hil_1$, $\Hil_2$ be separable Hilbert spaces. Then a function $f$ is said to be in $\Hoerr{m}{\rho} \bigl ( \mathcal{B}(\Hil_1,\Hil_2) \bigr )$ if and only if for all $a , \alpha \in \N_0^d$ the seminorms 
	\begin{align*}
		\bnorm{f}_{m , a \alpha} := \sup_{(x,\xi) \in \PSpace} \Bigl ( \sqrt{1 + \xi^2} \Bigr )^{\sabs{\alpha} \rho-m} \bnorm{\partial_x^a \partial_{\xi}^{\alpha} f(x,\xi)}_{\mathcal{B}(\Hil_1,\Hil_2)} < \infty 
	\end{align*}
	are finite where $\snorm{\cdot}_{\mathcal{B}(\Hil_1,\Hil_2)}$ denotes the operator norm on $\mathcal{B}(\Hil_1,\Hil_2)$. In case $\rho = 1$, one also writes $\Hoer{m} := \Hoermr{m}{1}$
\end{definition}
Hörmander symbols which have an expansion in $\eps$ that is uniform in the small parameter are called semiclassical. 
\begin{definition}[Semiclassical symbols $\SemiHoermr{m}{\rho} \bigl ( \mathcal{B}(\Hil_1,\Hil_2) \bigr )$]\label{rewriting:magnetic_weyl_calculus:defn:semiclassical_symbol}
	A map $f : [0,\eps_0) \longrightarrow \Hoerr{m}{\rho}$, $\eps \mapsto f_{\eps}$ is called a semiclassical symbol of order $m \in \R$ and weight $\rho \in [0,1]$, that is $f \in \SemiHoermr{m}{\rho}$, if there exists a sequence $\{ f_n \}_{n \in \N_0}$, $f_n \in \Hoerr{m - n \rho}{\rho}$, such that for all $N \in \N_0$, one has 
	\begin{align*}
		\eps^{-N} \left ( f_{\eps} - \sum_{n = 0}^{N-1} \eps^n \, f_n \right ) \in \Hoerr{m - N \rho}{\rho} 
	\end{align*}
	uniformly in $\eps$ in the sense that for any $N \in \N_0$ and $a , \alpha \in \N_0^d$, there exist constants $C_{N a \alpha} > 0$ such that 
	\begin{align*}
		\norm{f_{\eps} - \sum_{n = 0}^{N-1} \eps^n \, f_n}_{m - N \rho , a \alpha} \leq C_{N a \alpha} \, \eps^N 
	\end{align*}
	holds for all $\eps \in [0,\eps_0)$. If $\rho = 1$, then one abbreviates $\SemiHoermr{m}{1}$ with $\SemiHoer{m}$. 
\end{definition}
Lastly, we will need the notion of $\tau$-equivariant symbols. 
\begin{definition}[$\tau$-equivariant symbols $\SemiTau \bigl ( \mathcal{B}(\Hil_1 , \Hil_2) \bigr )$]\label{rewriting:magnetic_weyl_calculus:defn:tau_equivariant_symbol}
	Let $\tau_j : \Gamma^* \longrightarrow \mathcal{U}(\Hil_j)$, $j = 1,2$, be unitary $*$-representations of the group $\Gamma^*$. Then $f \in \SemiHoermr{0}{0}$ is $\tau$-equivariant, \ie an element of $\SemiTau \bigl ( \mathcal{B}(\Hil_1,\Hil_2) \bigr )$, if and only if 
	\begin{align*}
		f(k - \gamma^* , r) = \tau_2(\gamma^*) \, f(k,r) \, \tau_1(\gamma^*)^{-1} 
	\end{align*}
	holds for all $k \in \R^d_k$, $r \in \R^d_r$ and $\gamma^* \in \Gamma^*$. 
	
\end{definition}
\medskip

\noindent
Now the reader is in a position to translate the results derived in Appendix~B of \cite{Teufel:adiabaticPerturbationTheory:2003} to the context of magnetic Weyl calculus, the modifications are straightforward and all the necessary references have been given in this section. 




\section{The magnetic Bloch electron as a space-adiabatic problem} 
\label{mag_sapt}
Our tool of choice to derive effective dynamics is space-adiabatic perturbation theory \cite{PST:sapt:2002,PST:effDynamics:2003,Teufel:adiabaticPerturbationTheory:2003} which uses pseudodifferential techniques to derive perturbation expansions order-by-order in a systematic fashion. We adapt their results by replacing ordinary Weyl calculus with \emph{magnetic} Weyl calculus. Adiabatic decoupling only hinges on $\eps \ll 1$ and does not rely on $\lambda$ to be small.

\subsection{The three ingredients of space-adiabatic problems} 
\label{mag_sapt:slow_var}
The insight of \cite{PST:effDynamics:2003} was that the slow variation of the external electromagnetic field (quantified by $\eps \ll 1$) leads to a decoupling into slow (macroscopic) and fast (microscopic) degrees of freedom. This is characteristic of adiabatic systems whose three main features are 
\begin{enumerate}[(i)]
	\item A distinction between \emph{slow} and \emph{fast degrees of freedom}: the original (physical) Hilbert space $\Hil = L^2(\R^d_x)$ is decomposed unitarily into $\Hslow \otimes \Hfast := L^2(\BZ) \otimes L^2(\T^d_y)$ in which the unperturbed hamiltonian is block diagonal (see diagram~\eqref{mag_sapt:diagram:unperturbed}). 
	
	The names slow and fast Hilbert space are due to the operators defined on them: on the fast Hilbert space, the two conjugate observables are $-i \nabla_y$ and $\hat{y}$ acting on $\Hfast = L^2(\T^d_y)$, the Hilbert space associated to the Wigner Seitz cell $\WS \cong \T^d_y$ (cf.~equation~\eqref{intro:eqn:WS}); their commutator is of $\order(1)$. The operators $\Reps$ and $\KA$ (cf.~equation~\eqref{rewriting:magnetic_weyl_calculus:eqn:building_blocks_Td}) defined on $\Hslow = L^2(\BZ)$ are considered slow, because their commutator is of $\order(\eps)$. Since $\Hslow$ is the Hilbert space over the Brillouin zone $\BZ$ (cf{.} equation~\eqref{intro:eqn:BZ}), the dynamics of the slow variables $(\Reps,\KA)$ describes the motion across unit cells in momentum representation whereas the dynamics of the fast variables $(\hat{y},-i \nabla_y)$ describe what happens within the Wigner Seitz cell $\WS$. 
	\item A \emph{small, dimensionless parameter $\eps$} that quantifies the separation of spatial scales. In our situation, $\eps \ll 1$ relates the variation of the external electromagnetic field to the microscopic scale as given by the lattice constant. In addition, we have the parameter $\lambda$. However, only the semiclassical parameter $\eps$ is crucial for adiabatic decoupling. 
	\item A \emph{relevant part of the spectrum}, \ie a subset of the spectrum which is separated from the remainder by a gap. We are interested in the dynamics associated to a family of Bloch bands $\{ E_n \}_{n \in \mathcal{I}}$ that does not intersect or merge with bands from the remainder of the spectrum.
\end{enumerate}
\begin{assumption}[Gap condition]\label{mag_sapt:mag_wc:defn:gap_condition}
The spectrum of $\hat{H}_{\mathrm{per}}^{\Zak}$ satisfies the \emph{gap condition}, namely there exists a family of Bloch bands $\{ E_n \}_{n \in \mathcal{I}}$, $\mathcal{I} = [I_- , I_+] \cap \N_0$ such that 
 \begin{align*}
     \inf_{k \in \BZ} \mathrm{dist} \Bigl ( \bigcup_{n \in \mathcal{I}} \{ E_n(k) \} , \bigcup_{j \not\in \mathcal{I}} \{ E_j(k) \} \Bigr ) =: C_g > 0
     .
 \end{align*}
\end{assumption}
The spectral gap ensures that transitions from and to the relevant part of the spectrum are exponentially suppressed. Band crossings within the relevant part of the spectrum are admissible, though.
\medskip

\noindent
In the original publication, an additional assumption was made on the existence of a \emph{smooth}, $\tau$-equivariant basis, a condition that is equivalent to the triviality of a certain $U(N)$ bundle over the torus $\T^d_k$ where $N := \abs{\mathcal{I}}$ is the number of bands including multiplicity. At least for the physically relevant cases, \ie $d \leq 3$, Panati has shown that this is always possible for nonmagnetic Bloch bands \cite{Panati:trivialityBlochBundle:2006}. For $d \geq 4$, our results still hold if we add 
\begin{assumption}[Smooth frame ($d \geq 4$)]\label{mag_sapt:mag_wc:assumption:smooth_frame}
	If $d \geq 4$, we assume there exists an orthonormal basis (called \emph{smooth frame}) $\{ \varphi_j(\cdot) \}_{j=1, \ldots , \abs{\mathcal{I}}}$ of whose elements are smooth and $\tau$-equivariant with respect to $k$, \ie $\varphi_j(\cdot-\gamma^{\ast}) = \tau(\gamma^{\ast}) \varphi_j(\cdot)$ for all $\gamma^{\ast} \in \Gamma^{\ast}$ and for all  $j \in \{ 1 , \ldots , \abs{\mathcal{I}} \}$. 
\end{assumption}
%

\subsection{Rewriting the unperturbed problem: an adiabatic point of view} 
\label{mag_sapt:the_unperturbed_problem_from_an_adiabatic_point_of_view}
Let us consider the unperturbed case, \ie in the absence of an external electromagnetic field. Then the dynamics on $\Htau$ is generated by $\hat{H}_{\mathrm{per}}^{\Zak} = \int_{\BZ}^{\oplus} \dd k \, \Hper^{\Zak}(k)$. Each fiber hamiltonian $\Hper^{\Zak}(k)$ is an operator on the fast Hilbert space $\Hfast = L^2(\T^d_y)$. Then $\hat{\pi}_0 = \int_{\BZ}^{\oplus} \dd k \, \pi_0(k)$ is the projection onto the relevant part of the spectrum, where 
\begin{align*}
 \pi_0(k) := \sum_{n \in \mathcal{I}} \sopro{\varphi_n(k)}{\varphi_n(k)}
 .
\end{align*}
Even though the $\varphi_n(k)$ may not be continuous at eigenvalue crossings, the projection $k \mapsto \pi_0(k)$ is due to the spectral gap. Associated to the relevant band is a (non-unique) unitary $\hat{u}_0 = \int_{\BZ}^{\oplus} \dd k \, u_0(k)$ which ``straightens'' $\Htau$ into $L^2(\BZ_k) \otimes L^2(\T^d_y)$: for each $k \in \BZ$, we define
\begin{align*}
 u_0(k) := \sum_{n \in \mathcal{I}} \sopro{\chi_n}{\varphi_n(k)} + u_0^{\perp}(k)
\end{align*}
where $\chi_n \in L^2(\T^d_y)$, $n \in \Index$, are \emph{fixed} vectors \emph{independent of $k$} and $u_0^{\perp}(k)$ (also non-unique) acts on the complement of $\mathrm{ran} \, \pi_0(k)$ and is such that $\hat{u}_0$ is a proper unitary. Even though this means $u_0$ is not unique, the specific choices of the $\{ \chi_n \}_{n \in \Index}$ and $u_0^{\perp}$ will not enter the derivation. Then we can put all parts of the puzzle into a diagram:
\begin{align}
	\bfig
		\node L2R(-900,0)[L^2(\R^d_x)]
		\node piL2R(-900,-600)[\Zak^{-1} \hat{\pi}_0 \Zak L^2(\R^d_x)]
		\node Htau(0,0)[\Htau]
		\node piHtau(0,-600)[\hat{\pi}_0 \Htau]
		\node HslowHfast(900,0)[L^2(\BZ) \otimes L^2(\T^d_y)]
		\node Href(900,-600)[L^2(\BZ) \otimes \C^N]
		\arrow[L2R`Htau;\Zak]
		\arrow/-->/[L2R`piL2R;\Zak^{-1} \hat{\pi}_0 \Zak]
		\arrow/-->/[piL2R`piHtau;]
		\arrow[Htau`piHtau;\hat{\pi}_0]
		\arrow[Htau`HslowHfast;\hat{u}_0]
		\arrow[HslowHfast`Href;\Piref]
		\arrow/-->/[piHtau`Href;]
		\Loop(-900,0){L^2(\R^d_x)}(ur,ul)_{e^{-i \frac{t}{\eps} \hat{H}}} 
		\Loop(0,0){\Htau}(ur,ul)_{e^{-i \frac{t}{\eps} \hat{H}^{\Zak}}} 
		\Loop(900,-600){L^2(\BZ) \otimes \C^N}(dr,dl)^{e^{-i \frac{t}{\eps} \hat{h}_{\mathrm{eff} \, 0}}} 
	\efig
	\label{mag_sapt:diagram:unperturbed}
\end{align}
The reference projection $\Piref = \id_{L^2(\BZ)} \otimes \piref$ acts trivially on the first factor, $L^2(\BZ)$, and projects via 
\begin{align}
	\piref = \sum_{j = 1}^N \sopro{\chi_j}{\chi_j} = u_0(k) \, \pi_0(k) \, u_0^*(k)
	\label{mag_sapt:eqn:piref}
\end{align}
onto an $N$-dimensional subspace of $L^2(\T^d_y)$. We will identify $\piref L^2(\T^d_y)$ with $\C^N$ when convenient and in this sense, we identify the range of $\Piref$ with the reference space $\Href := L^2(\BZ) \otimes \C^N$. 

The dynamics in the lower-right corner is generated by the effective hamiltonian
\begin{align*}
 \hat{h}_{\mathrm{eff} \, 0} := \Piref \, \hat{u}_0 \, \hat{H}_{\mathrm{per}}^{\Zak} \, \hat{u}_0^* \, \Piref
\end{align*}
which reduces to $E_n(\hat{k})$ if the relevant part of the spectrum consists of an isolated Bloch band. 

\subsection{Adiabatic decoupling in the presence of external fields} 
\label{mag_sapt:adiabatic_decoupling}
Now the question is whether a similar diagram exists even if the perturbation is present, \ie if there exist a \emph{tilted projection} $\Pi$, an \emph{intertwining unitary} $U$ and an \emph{effective hamiltonian} $\hat{h}_{\mathrm{eff}}$ that take the place of $\hat{\pi}_0$, $\hat{u}_0$ and $\hat{h}_{\mathrm{eff} \, 0}$? This has been answered in the positive for magnetic fields that admit $\BCont^{\infty}(\R^d,\R^d)$ vector potentials in \cite{PST:effDynamics:2003} where these objects are explicitly constructed by recursion. We replace standard Weyl calculus used in the original publication with its magnetic variant (cf{.} Section~\ref{rewriting:magnetic_weyl_calculus}) which naturally allows for the treatment of more general magnetic fields with components in $\BCont^{\infty}$. 

The construction of $\Pi$ and $U$ detailed in the next section is a ``defect construction'' where recursion relations derived from
\begin{align*}
	{\Pi}^2 &= \Pi
	&&
	\bigl [ \Pi , \hat{H}^{\Zak} \bigr ] = 0 \\
	{U}^* \, U &= \id_{\Htau}, \; U \, {U}^* = \id_{L^2_{\mathrm{per}}(\BZ) \otimes L^2_{\mathrm{per}}(\WS)}
	&&
	U \, \Pi \, {U}^* = \Piref 
\end{align*}
relate the $n$th term to all previous terms. These four conditions merely characterize that $\Pi$ and $U$ are still a projection and a unitary (first column) and adapted to the problem (second column). These equations can be translated via magnetic Weyl calculus to
\begin{align}
	\pi \magW \pi &= \pi + \ordere{\infty}
	&&
	\bigl [ \pi , H^{\Zak} \bigr ]_{\magW} = \ordere{\infty}
	\label{mag_sapt:eqn:condition_corrected_objects} \\
	u \magW {u}^* &= 1 + \ordere{\infty} = {u}^* \magW u
	&&
	u \magW \pi \magW {u}^* = \pi_{\mathrm{eff}} + \ordere{\infty}
	\notag 
\end{align}
where 
\begin{align}
	\Hamiltonian^{\Zak}(k,r) := \tfrac{1}{2} \bigl ( -i \nabla_y + k \bigr )^2 + V_{\Gamma}(\hat{y}) + \phi(r) 
	\label{mag_sapt:eqn:symbol_H_Zak} 
\end{align}
is the operator-valued symbol associated to the magnetic pseudodifferential operator $\opHamiltonian^{\Zak}$ defined by equation~\eqref{rewrite:Zak:eqn:Zak_transformed_hamiltonian}. Note that the magnetic vector \emph{potential} $A$ does not enter the definition of the symbol $\Hamiltonian^{\Zak}$ and we do not need to impose conditions on $A$ to ensure the symbol $\Hamiltonian^{\Zak}$ is well-behaved after minimal substitution. This is why magnetic Weyl calculus can be used to treat much more general magnetic fields. 

For technical reasons, $\Opk^A(u)$ and $U$, for instance, agree only up to an error that is arbitrarily small in $\eps$ with respect to the operator norm,
\begin{align*}
	U = \Opk^A(u) + \ordern(\eps^{\infty})
	.
\end{align*}
The tilted projection and intertwining unitary are now used to define the effective hamiltonian as the magnetic quantization of 
\begin{align*}
	\heff := \piref \, u \magW \Hamiltonian^{\Zak} \magW u^* \, \piref 
\end{align*}
which generates effective dynamics, \ie for initial states in $\Pi \Htau$ we can approximate the full time evolution in terms of $e^{- i \frac{t}{\eps} \Opk^A(\heff)}$. In turn, the effective quantum evolution can be approximated by semiclassical dynamics. Theorem~\ref{eom:thm:semiclassical_limit}, the main result of the next section, will make this statement precise. 



\section{Derivation of effective quantum dynamics} 
\label{effective_quantum_dynamics}
The aforementioned ``defect construction'' yields the tilted projection $\pi$ and the intertwining unitary $u$ as asymptotic expansion in $\eps$ and $\lambda$. It is important that the decoupling is \emph{solely due to the separation of spatial scales} quantified by $\eps$ and \emph{independent} of $\lambda$ which regulates the strength of the magnetic field.

\subsection{The dynamics in the almost invariant subspace} 
\label{effective_quantum_dynamics:effective_dynamics}
We will quickly explain how $\Pi$ and $U$ are computed order-by-order in $\eps$ and $\lambda$. We adapt the general recipe explained in \cite{Teufel:adiabaticPerturbationTheory:2003} to incorporate two parameters: since the decoupling is due to the separation of spatial scales quantified by $\eps \ll 1$, we will order corrections in powers of $\eps$ first. Expanding the magnetic Weyl product to zeroth order, we can check 
\begin{align*}
	\pi_0 \magW \pi_0 &= \pi_0 + \ordere{}
	&&
	\bigl [ \pi_0 , H^{\Zak} \bigr ]_{\magW} = \ordere{} 
	\\
	u_0 \magW {u_0}^* &= 1 + \ordere{} = {u_0}^* \magW u_0
	&&
	u_0 \magW \pi_0 \magW u_0^* = \pi_{\mathrm{eff}} + \ordere{} 
	. 
\end{align*}
Here, $\bigl [ \pi_0 , H^{\Zak} \bigr ]_{\magW} := \pi_0 \magW H^{\Zak} - H^{\Zak} \magW \pi_0$ denotes the magnetic Weyl commutator. The asymptotic expansion of the product is key to deriving corrections in a systematic manner: the $\ordere{}$ terms can be used to infer $\pi_1$ and $u_1$, the subprincipal symbols. Then, one proceeds by recursion: if $\pi^{(n)} := \sum_{l = 0}^n \eps^l \pi_l$ and $u^{(n)} := \sum_{l = 0}^n \eps^l u_l$ satisfy equations~\eqref{mag_sapt:eqn:condition_corrected_objects} up to errors of order $\eps^{n+1}$, then we can compute $\pi_{n+1}$ and $u_{n+1}$. The construction of $\pi$ and $u$ follows exactly from Lemma 3.8 and Lemma 3.15 of \cite{Teufel:adiabaticPerturbationTheory:2003}; it is \emph{purely algeraic} and only uses that we have a recipe to expand the Moyal product in terms of the semiclassical parameter $\eps$. Let us define 
\begin{align}
	\pi^{(n)} \magW \pi^{(n)} - \pi^{(n)} &=: \eps^{n+1} G_{n+1} + \ordere{n+2} 
	\label{effective_quantum_dynamics:effective_dynamics:eqn:projection_defects} \\
	\bigl [ \Hamiltonian^{\Zak} , \pi^{(n)} + \eps^{n+1} \pi_{n+1}^{\mathrm{d}} \bigr ]_{\magW} &=: \eps^{n+1} F_{n+1} + \ordere{n+2} 
	\notag 
\end{align}
as projection and commutation defects and 
\begin{align}
	u^{(n)} \magW {u^{(n)}}^* - 1 &=: \eps^{n+1} A_{n+1} + \ordere{n+2} 
	\label{effective_quantum_dynamics:effective_dynamics:eqn:unitarity_defects} \\
	\bigl ( u^{(n)} + \eps^{n+1} a_{n+1} u_0 \bigr ) \magW \pi^{(n+1)} \magW \bigl ( u^{(n)} + \eps^{n+1} a_{n+1} u_0 \bigr )^* &=: \eps^{n+1} B_{n+1} + \ordere{n+2} 
	\notag
\end{align}
as unitarity and intertwining defects. The diagonal part of the projection $\pi_{n+1}^{\mathrm{d}}$ can be computed from $G_{n+1}$ via 
\begin{align}
	\pi_{n+1}^{\mathrm{D}} := - \pi_0 G_{n+1} \pi_0 + (1 - \pi_0) G_{n+1} (1 - \pi_0) 
	\label{effective_quantum_dynamics:effective_dynamics:eqn:pi_n1_diag} 
	. 
\end{align}
The term 
\begin{align}
	a_{n+1} = - \tfrac{1}{2} A_{n+1} 
	\label{effective_quantum_dynamics:effective_dynamics:eqn:u_a} 
\end{align}
stems from the ansatz $u_{n+1} = (a_{n+1} + b_{n+1}) u_0$ where $a_{n+1}$ and $b_{n+1}$ are symmetric and antisymmetric, respectively. One can solve the second equation for 
\begin{align}
	b_{n+1} = \bigl [ \piref , B_{n+1} \bigr ] 
	\label{effective_quantum_dynamics:effective_dynamics:eqn:u_b} 
\end{align}
where $\piref$ is the reference projection on $L^2(\T^d_y)$ given by equation~\eqref{mag_sapt:eqn:piref}. This equation fixes only the off-diagonal part of $b_{n+1}$ as $\piref B_{n+1} \piref = 0 = (1 - \piref) B_{n+1} (1 - \piref)$ and in principle one is free to choose the diagonal part of $b_{n+1}$. This means, there is a freedom that allows arbitrary unitary transformations within $\piref L^2(\T^d_y)$ as well as its orthogonal complement. In general, it is not possible to solve 
\begin{align}
	\bigl [ \Hamiltonian^{\Zak} , \pi_{n+1}^{\mathrm{OD}} \bigr ] = - F_{n+1} 
\end{align}
explicitly since Bloch functions at band crossings \emph{within} the relevant part of the spectrum (which are admissible) are no longer differentiable. In any case, $\pi$ can be constructed locally around $(k_0,r_0)$ by asymptotically expanding the Moyal resolvent $(\Hamiltonian^{\Zak} - z)^{(-1)_B}$, \ie the symbol defined through the relations 
\begin{align*}
	(\Hamiltonian^{\Zak} - z) \magW (\Hamiltonian^{\Zak} - z)^{(-1)_B} = 1 = (\Hamiltonian^{\Zak} - z)^{(-1)_B} \magW (\Hamiltonian^{\Zak} - z) 
	, 
\end{align*}
and setting 
\begin{align}
	\pi(k,r) = \frac{i}{2 \pi} \int_{C(k_0,r_0)} \dd z \, (\Hamiltonian^{\Zak} - z)^{(-1)_B}(k,r) + \ordere{\infty} 
\end{align}
in a neighborhood of $(k_0,r_0)$. A recent result by Iftime, Măntoiu and Purice \cite{MantoiuPurice:BealsCriterion:2008} suggests that under these circumstances ($\Hamiltonian^{\Zak}$ is elliptic and selfadjoint operator-valued) $(\Hamiltonian^{\Zak} - z)^{(-1)_B}$ always exists and is a Hörmander symbol even in the presence of a magnetic field. We reckon their result extends to the case of operator-valued symbols, but seeing how tedious the proof is, we simply stick to the procedure used by Panati, Spohn and Teufel \cite[Lemma~5.17]{Teufel:adiabaticPerturbationTheory:2003}. This construction uniquely fixes the tilted Moyal projection $\pi$, but not the Moyal unitary $u$. 

As the \emph{two-parameter expansion of the product} 
\begin{align*}
	f \magW g \asymp \sum_{n = 0}^{\infty} \sum_{k = 0}^n \eps^n \lambda^k \, (f \magW g)_{(n,k)} 
	, 
\end{align*}
contributes only \emph{finitely many terms} in $\lambda$ for fixed power of $n$ of $\eps$ \cite{Lein:two_parameter_asymptotics:2008}, we can order the terms of the expansion of $\pi$ and $u$ in powers of $\lambda$, \eg 
\begin{align*}
	\pi_n = \sum_{k = 0}^n \lambda^k \, \pi_{(n,k)} 
	. 
\end{align*}
The magnetic Weyl product as well as its asymptotic expansion are defined in terms of \emph{oscillatory integrals}, \ie integrals which exist in the distributional sense. If we take the limit $\lambda \rightarrow 0$ of $f \magW g$, we can interchange oscillatory integration and limit procedure \cite[p.~90]{Hoermander:fourierIntOp1:1972} and conclude $\lim_{\lambda \rightarrow 0} f \magW g = f \Weyl g$ where $\Weyl$ is the usual Moyal product. Similarly, we can apply this reasoning to the asymptotic expansion: for any fixed $N \in \N_0$, we may write the product as 
\begin{align*}
	f \magW g = \sum_{n = 0}^N \eps^n \biggl ( \sum_{k = 0}^n \lambda^k \, (f \magW g)_{(n,k)} \biggr ) + \eps^{N+1} \, R_{N+1}^B(f,g) 
\end{align*}
and taking the limit $\lambda \rightarrow 0$ means only the nonmagnetic terms $(f \magW g)_{(n,0)}$ survive. The remainder also behaves nicely when taking the limit as it is also just another oscillatory integral and $\lim_{\lambda \rightarrow 0} R_{N+1}^B(f,g)$ is exactly the remainder of the nonmagnetic Weyl product. 
\medskip

\noindent
Hence, we can now prove the main result of this paper: 
\begin{theorem}[Effective quantum dynamics]\label{effective_quantum_dynamics:effective_dynamics:thm:adiabatic_decoupling}
	Let Assumptions~\ref{intro:assumption:V_Gamma}, \ref{intro:assumption:em_fields} and~\ref{mag_sapt:mag_wc:defn:gap_condition} be satisfied. Furthermore, if $d \geq 4$, we add Assumption~\ref{mag_sapt:mag_wc:assumption:smooth_frame}. Then there exist 
	\begin{enumerate}[(i)]
		\item an orthogonal projection $\Pi \in \mathcal{B}(\Htau)$, 
		\item a unitary map $U$ which intertwines $\Htau$ and $L^2(\BZ) \otimes L^2(\T^d_y)$, and 
		\item a selfadjoint operator $\Opk^A(\heff) \in \mathcal{B} \bigl ( L^2(\BZ) \otimes \C^N \bigr)$, $N := \abs{\Index}$ 
	\end{enumerate}
	such that 
	\begin{align}
		\bnorm{\bigl [ \opHamiltonian^{\Zak} , \Pi \bigr ]} = \ordere{\infty} 
		\label{effective_quantum_dynamics:effective_dynamics:eqn:commutator_Pi_HZak}
	\end{align}
	and
	\begin{align}
		\bnorm{\bigl ( e^{- i s \opHamiltonian^{\Zak}} - U^* e^{- i s \Opk^A(\heff)} U \bigr ) \Pi}_{\mathcal{B}(\Htau)} = \mathcal{O} \bigl ( \eps^{\infty}(1 + \abs{s}) \bigr ) 
		\label{effective_quantum_dynamics:effective_dynamics:eqn:adiabatic_decoupling_dynamics}
		. 
	\end{align}
	The effective hamiltonian is the magnetic quantization of the $\Gamma^*$-periodic symbol 
	\begin{align}
		\heff := \piref \, u \magW \Hamiltonian^{\Zak} \magW u^* \, \piref \asymp \sum_{n = 0}^{\infty} \eps^n \, \heff_n \in \SemiHoer{0}_{\tau \equiv 1} \bigl ( \mathcal{B}(\C^N) \bigr ) 
	\end{align}
	whose asymptotic expansion can be computed to any order in $\eps$ and $\lambda$. To each order in $\eps$, only finitely many terms in $\lambda$ contribute, $\heff_n = \sum_{k = 0}^n \lambda^k \, \heff_{(n,k)}$.
\end{theorem}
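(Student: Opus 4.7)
The plan is to follow the recursive ``defect construction'' of Panati, Spohn, and Teufel, already laid out through Section~\ref{effective_quantum_dynamics:effective_dynamics}, and then promote the symbol-level statements to operator-level ones via the magnetic Caldéron-Vaillancourt theorem and a Duhamel argument. The crucial observation is that the recursion formulas~\eqref{effective_quantum_dynamics:effective_dynamics:eqn:pi_n1_diag}--\eqref{effective_quantum_dynamics:effective_dynamics:eqn:u_b} are \emph{purely algebraic}: they rely only on having an asymptotic expansion of the product $\magW$ order-by-order. Since the magnetic Weyl product admits such a two-parameter expansion in $(\eps,\lambda)$ on the operator-valued Hörmander classes by \cite{Lein:two_parameter_asymptotics:2008}, the entire PST recursion carries over verbatim to the magnetic setting.

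First I would construct the formal symbols $\pi \asymp \sum_n \eps^n \pi_n$ and $u \asymp \sum_n \eps^n u_n$ iteratively. At step $n$, the diagonal part of $\pi_{n+1}$ is fixed by \eqref{effective_quantum_dynamics:effective_dynamics:eqn:pi_n1_diag}, while the off-diagonal part is obtained locally around $(k_0,r_0)$ from the asymptotic expansion of the Moyal resolvent $(\Hamiltonian^{\Zak} - z)^{(-1)_B}$, whose inversion at principal order is guaranteed by the gap condition (Assumption~\ref{mag_sapt:mag_wc:defn:gap_condition}); the symmetric and antisymmetric parts of $u_{n+1}$ then follow from \eqref{effective_quantum_dynamics:effective_dynamics:eqn:u_a} and \eqref{effective_quantum_dynamics:effective_dynamics:eqn:u_b}. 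A Borel resummation yields semiclassical symbols $\pi, u$ satisfying the defect equations \eqref{mag_sapt:eqn:condition_corrected_objects} up to $\ordere{\infty}$. Quantizing via $\Opk^A$, the magnetic Caldéron-Vaillancourt theorem transports these $\ordere{\infty}$ symbolic estimates into $\ordern(\eps^{\infty})$ operator-norm estimates, so $\Opk^A(\pi)$ is a near-projection and $\Opk^A(u)$ is a near-unitary. Standard functional-calculus corrections turn them into the honest objects of the statement: $\Pi := \chi_{[1/2, \infty)}(\Opk^A(\pi))$ is a true projection differing from $\Opk^A(\pi)$ by $\ordern(\eps^{\infty})$ (using that the spectrum of $\Opk^A(\pi)$ is concentrated near $\{0,1\}$), and a polar decomposition of $\Opk^A(u)$ furnishes the unitary $U$ intertwining $\Pi \Htau$ with $\Piref \bigl ( L^2(\BZ) \otimes L^2(\T^d_y) \bigr ) \cong L^2(\BZ) \otimes \C^N$.

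The commutator bound \eqref{effective_quantum_dynamics:effective_dynamics:eqn:commutator_Pi_HZak} is then immediate from the Moyal identity $[\pi, \Hamiltonian^{\Zak}]_{\magW} \in \eps^{\infty} \SemiTau \bigl ( \mathcal{B}(L^2(\T^d_y)) \bigr )$ built into the construction, after noting that the $\ordern(\eps^{\infty})$ difference $\Pi - \Opk^A(\pi)$ preserves the bound. For the dynamics estimate \eqref{effective_quantum_dynamics:effective_dynamics:eqn:adiabatic_decoupling_dynamics}, I would decompose $\opHamiltonian^{\Zak} \Pi = \Pi \opHamiltonian^{\Zak} \Pi + [\opHamiltonian^{\Zak}, \Pi] \Pi$: the commutator term is $\ordern(\eps^{\infty})$ by the bound just proven, while the diagonal piece $\Pi \opHamiltonian^{\Zak} \Pi$ is intertwined to $\Opk^A(\heff)$ modulo $\ordern(\eps^{\infty})$ via the symbolic identity $\piref \, u \magW \Hamiltonian^{\Zak} \magW u^* \, \piref = \heff$ combined with the intertwining property $u \magW \pi \magW u^* = \Piref + \ordere{\infty}$. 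A standard Duhamel integration then produces the linear-in-$\abs{s}$ error with $\ordern(\eps^{\infty})$ prefactor.

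The expansion $\heff_n = \sum_{k = 0}^n \lambda^k \, \heff_{(n,k)}$ follows by inserting the $(\eps,\lambda)$-expansions of $\pi$, $u$, and $\Hamiltonian^{\Zak}$ into the definition of $\heff$ and using that each term of the two-parameter Moyal expansion contributes only finitely many powers of $\lambda$ at each order in $\eps$ \cite{Lein:two_parameter_asymptotics:2008}. The main obstacle, and the reason the authors introduce the operator-valued framework of Section~\ref{rewriting:magnetic_weyl_calculus:mag_wc_on_Td}, is that $\Hamiltonian^{\Zak}$ is an \emph{unbounded} operator-valued symbol; consequently one must verify that the Moyal resolvent construction, the Caldéron-Vaillancourt theorem, and the asymptotic expansion of the magnetic product all extend from $\C$-valued to $\mathcal{B}(\Hil_1, \Hil_2)$-valued, $\tau$-equivariant Hörmander symbols. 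Once these technical extensions are in place, the rest of the argument parallels \cite[Ch.~3]{Teufel:adiabaticPerturbationTheory:2003} line by line.
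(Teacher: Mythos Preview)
Your proposal is correct and follows essentially the same route as the paper: the proof is delegated to Propositions~\ref{effective_quantum_dynamics:effective_dynamics:prop:projection} and~\ref{effective_quantum_dynamics:effective_dynamics:prop:unitary}, which carry out the PST defect recursion in the magnetic calculus exactly as you describe, with the $\Gamma^*$-periodicity of $\heff$ read off from right-$\tau$-covariance of $u$, the symbol-class statements for $\heff$ coming from \cite{Lein:two_parameter_asymptotics:2008}, and the dynamics estimate from Duhamel. The only cosmetic differences are that the paper realizes the true projection $\Pi$ via a Riesz contour integral rather than your spectral cutoff $\chi_{[1/2,\infty)}$, and obtains the true unitary $U$ via the Nagy formula rather than polar decomposition; both pairs of choices are standard and interchangeable here.
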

One deduces from equation~\eqref{effective_quantum_dynamics:effective_dynamics:eqn:commutator_Pi_HZak} and a Duhamel argument that the unitary time evolution generated by $\opHamiltonian^{\Zak}$ and $\Pi$ almost commute even for macroscopic times $t = \eps s$ and hence, up to an error of arbitrarily large order in $\eps$, the space $\Piref \Htau$ is left invariant by the dynamics, 

\begin{align*}
	(1 - \Pi) e^{- i \frac{t}{\eps} \opHamiltonian^{\Zak}} \Pi = \order_{\norm{\cdot}}(\eps^{\infty} \abs{t}) 
	. 
\end{align*}
Thus we call $\Pi \Htau$ the \emph{almost invariant subspace} and in a sense, it is the tilted ``eigenspace'' associated to the family of relevant bands. 

The proof of the above theorem amounts to showing (i)-(iii) separately. 
\begin{proposition}[Tilted projection]\label{effective_quantum_dynamics:effective_dynamics:prop:projection}
	Under the assumptions of Theorem~\ref{effective_quantum_dynamics:effective_dynamics:thm:adiabatic_decoupling} there exists and orthogonal projection $\Pi \in \mathcal{B}(\Htau)$ such that 
	\begin{align}
		\bigl [ \opHamiltonian^{\Zak} , \Pi \bigr ] = \ordern(\eps^{\infty}) 
	\end{align}
	and $\Pi = \Opk^A(\pi) + \ordern(\eps^{\infty})$ where $\Opk^A(\pi)$ is the magnetic Weyl quantization of a $\tau$-equivariant semiclassical symbol 
	\begin{align*}
		\pi \asymp \sum_{n = 0}^{\infty} \eps^n \, \pi_n \in \SemiTau \bigl ( \mathcal{B}(\Hfast) \bigr )
	\end{align*}
	whose principal part $\pi_0(k,r)$ coincides with the spectral projection of $\Hamiltonian^{\Zak}(k,r)$ onto the subspace corresponding to the given isolated family of Bloch bands $\{ E_n \}_{n \in \Index}$. Each term in the expansion can be written as a finite sum 
	\begin{align*}
		\pi_n = \sum_{k = 0}^n \lambda^k \, \pi_{(n,k)} \in \SemiTau \bigl ( \mathcal{B}(\Hfast) \bigr ) 
	\end{align*}
	ordered by powers of $\lambda$. For $\lambda \rightarrow 0$, the projection $\pi$ reduces to the nonmagnetic projection $\pi^0 \asymp \sum_{n = 0}^{\infty} \eps^n \, \pi_{(n,0)}$. 
\end{proposition}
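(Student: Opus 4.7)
The plan is to follow the recursive ``defect construction'' of Panati, Spohn and Teufel as outlined in the preceding subsection, but carried out with the two-parameter magnetic Weyl calculus. I would start from the principal symbol $\pi_0(k,r) = \sum_{n \in \Index} \sopro{\varphi_n(k)}{\varphi_n(k)}$, which is the Riesz spectral projection of $\Hamiltonian^{\Zak}(k,r)$ onto the relevant family of bands. By the gap condition (Assumption~\ref{mag_sapt:mag_wc:defn:gap_condition}) the loop $C(k_0,r_0)$ surrounding $\bigcup_{n \in \Index} \{E_n(k)\}$ can be chosen locally uniformly, so $\pi_0 \in \SemiTau(\mathcal{B}(\Hfast))$ is a $\tau$-equivariant Hörmander symbol of order zero; note that $\pi_0$ is independent of $r$ and of $\lambda$, hence belongs to $\Hoer{0}_{\tau}$ with trivial $\lambda$-dependence.

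Next I would construct $\pi_n$ for $n\ge 1$ recursively using the two defect equations~\eqref{effective_quantum_dynamics:effective_dynamics:eqn:projection_defects}. At each step, assuming $\pi^{(n)}:=\sum_{l=0}^n \eps^l \pi_l$ satisfies $\pi^{(n)} \magW \pi^{(n)} = \pi^{(n)} + \ordere{n+1}$ and $[\Hamiltonian^{\Zak},\pi^{(n)}]_{\magW} = \ordere{n+1}$, I read off $G_{n+1}$ and $F_{n+1}$ from the asymptotic expansion of $\magW$ computed in \cite{Lein:two_parameter_asymptotics:2008}, define the diagonal part $\pi_{n+1}^{\mathrm{D}}$ by~\eqref{effective_quantum_dynamics:effective_dynamics:eqn:pi_n1_diag}, and determine the off-diagonal part $\pi_{n+1}^{\mathrm{OD}}$ from the commutator equation $[\Hamiltonian^{\Zak},\pi_{n+1}^{\mathrm{OD}}]=-F_{n+1}^{\mathrm{OD}}$. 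Since band crossings may occur inside the relevant family, this equation cannot be solved globally by dividing eigenvalues; the workaround, exactly as in Lemma~5.17 of \cite{Teufel:adiabaticPerturbationTheory:2003}, is to construct $\pi$ locally in a neighbourhood of each $(k_0,r_0)\in\BZ\times\R^d$ via the Moyal resolvent $(\Hamiltonian^{\Zak}-z)^{(-1)_B}$, expand its symbol asymptotically in $\eps$ and $\lambda$, and set $\pi(k,r) = \tfrac{i}{2\pi}\int_{C(k_0,r_0)}\dd z\,(\Hamiltonian^{\Zak}-z)^{(-1)_B}(k,r)$. A partition-of-unity argument on $\BZ\times\R^d$ together with uniqueness modulo $\ordere{\infty}$ produces a globally defined $\tau$-equivariant symbol. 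This is the main technical obstacle: verifying that the local symbols patch into a genuine element of $\SemiTau(\mathcal{B}(\Hfast))$, with all the seminorm estimates required by Definitions~\ref{rewriting:magnetic_weyl_calculus:defn:semiclassical_symbol} and~\ref{rewriting:magnetic_weyl_calculus:defn:tau_equivariant_symbol}, and with $\tau$-equivariance preserved under the recursion (the latter is automatic because $\Hamiltonian^{\Zak}$ itself is $\tau$-equivariant and the magnetic Moyal product respects equivariance).

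I then apply Borel summation to turn the formal series $\sum_n \eps^n \pi_n$ into a genuine semiclassical symbol $\pi \in \SemiTau(\mathcal{B}(\Hfast))$. Its magnetic quantization $\Opk^A(\pi)$ is bounded on $\Htau$ by the Calderón--Vaillancourt-type theorem for magnetic $\Psi$DO in \cite{Iftimie_Mantiou_Purice:magneticPseudodifferentialOperators:2005}, is selfadjoint modulo $\ordern(\eps^\infty)$ and satisfies $\Opk^A(\pi)^2 = \Opk^A(\pi) + \ordern(\eps^\infty)$ together with $[\opHamiltonian^{\Zak},\Opk^A(\pi)] = \ordern(\eps^\infty)$. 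A standard functional-calculus trick (the projection $\Pi := \tfrac{i}{2\pi}\oint_{\abs{z-1/2}=1/2}\dd z\,(z-\Opk^A(\pi))^{-1}$, which is well defined because the spectrum of $\Opk^A(\pi)$ is contained in $\ordern(\eps^\infty)$-neighbourhoods of $\{0,1\}$) then corrects $\Opk^A(\pi)$ into a genuine orthogonal projection $\Pi$ which still differs from $\Opk^A(\pi)$ by $\ordern(\eps^\infty)$ and thus inherits $[\opHamiltonian^{\Zak},\Pi] = \ordern(\eps^\infty)$.

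Finally, the $\lambda$-structure $\pi_n = \sum_{k=0}^n \lambda^k \pi_{(n,k)}$ follows directly from the two-parameter expansion of $\magW$, which at each fixed $\eps$-order contributes only finitely many powers of $\lambda$ (Theorem~1.1 of \cite{Lein:two_parameter_asymptotics:2008}); an induction on $n$ shows the same is true for $G_n$, $F_n$, hence for $\pi_n$. The limit $\lambda\to 0$ is handled by the oscillatory-integral argument quoted immediately before the statement: exchanging limit and oscillatory integration in the expansion of $\magW$ reduces every step of the recursion to the corresponding non-magnetic one, so $\pi\bigl|_{\lambda=0} \asymp \sum_n \eps^n \pi_{(n,0)} = \pi^0$, which concludes the proof.
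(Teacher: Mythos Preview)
Your proposal is correct and follows essentially the same route as the paper's own proof sketch: both transcribe the recursive defect construction of \cite[Proposition~5.16 and Lemma~5.17]{Teufel:adiabaticPerturbationTheory:2003} to the magnetic setting, invoking the two-parameter expansion of $\magW$ from \cite{Lein:two_parameter_asymptotics:2008} together with the magnetic Calder\'on--Vaillancourt theorem, then upgrade the almost-projection $\Opk^A(\pi)$ to a true projection by a Riesz integral, and deduce the finite $\lambda$-expansion of each $\pi_n$ directly from the fact that the $n$th-order term of the product contains at most $n$ powers of $\lambda$. Your write-up is in fact more explicit than the paper's sketch (you spell out the Borel resummation and the partition-of-unity patching that the paper delegates to \cite{Teufel:adiabaticPerturbationTheory:2003}), but the strategy and the key ingredients are identical.
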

\begin{proof}[Sketch]
	The proof relies on a well-developed \emph{magnetic} Weyl calculus adapted to operator-valued symbols (cf.~Section~\ref{rewriting:magnetic_weyl_calculus}) and the gap condition. In particular, one needs a magnetic Caldéron-Vaillancourt theorem, composition and quantization of Hörmander symbols \cite{Iftimie_Mantiou_Purice:magneticPseudodifferentialOperators:2005} and finally, an asymptotic two-parameter expansion of the magnetic Weyl product $\magW$ \cite{Lein:two_parameter_asymptotics:2008}. The interested reader may check line-by-line that the original proof \cite[Proposition~5.16]{Teufel:adiabaticPerturbationTheory:2003} can be transliterated to the magnetic context with obvious modifications. If we were using standard Weyl calculus, the major obstacle would be to control derivatives of $\pi$ since vector potentials may be unbounded. In magnetic Weyl calculus the vector potential at no point enters the calculuations and the assumptions on the magnetic field assure that $\pi \in \SemiTau \bigl ( \mathcal{B}(\Hfast) \bigr )$ is a proper $\tau$-equivariant semiclassical Hörmander-class symbol (cf.~Definition~\ref{rewriting:magnetic_weyl_calculus:defn:tau_equivariant_symbol}). 
	
	The fact that we can write all of the $\pi_n$ as \emph{finite} sum of terms ordered by powers of $\lambda$ stems from the fact that calculating $\pi_n$ involves the expansion of the product up to $n$th power in $\eps$, \eg for the projection defect, we find 
	\begin{align*}
		\pi^{(n-1)} \magW &\pi^{(n-1)} - \pi^{(n-1)} = \eps^n \negmedspace \negmedspace \negmedspace \negmedspace \sum_{a + b + c = n} \negmedspace \negmedspace (\pi_a \magW \pi_b \bigr )_{(c)} + \ordere{n+1}
		\\
		&= \eps^n \negmedspace \negmedspace \negmedspace \negmedspace \sum_{a + b + c = n} \sum_{a' = 0}^a \sum_{b' = 0}^b \sum_{c' = 0}^c \lambda^{a' + b' + c'} \, \bigl ( \pi_{(a,a')} \magW \pi_{(b,b')} \bigr )_{(c,c')} + \ordere{n+1} 
		. 
	\end{align*}
	Certainly, the exponent of $\lambda$ is always bounded by $n \geq a' + b' + c'$. And since the sum is finite, this clearly defines a semiclassical symbol in $\eps$ is shown (cf.~Definition~\ref{rewriting:magnetic_weyl_calculus:defn:semiclassical_symbol}). Similar arguments for the commutation defect in conjunction with the comments in the beginning of this section show $\pi$ to be a semiclassical symbol. 
	It is well-behaved under the $\lambda \rightarrow 0$ limit and reduces to the projection associated to the case $B = 0$. 
	\medskip
	
	\noindent
	Lastly, to make the almost projection $\Opk^A(\pi)$ into a true projection, we define $\Pi$ to be the spectral projection onto the spectrum in the vicinity of $1$, 
	\begin{align*}
		\Pi := \int_{\abs{z - \nicefrac{1}{2}} = 1} \dd z \, \bigl ( \Opk^A(\pi) - z \bigr )^{-1} 
		. 
	\end{align*}
	This concludes the proof. 
\end{proof}
Similarly, one can modify the proof of Proposition~5.18 found in \cite{Teufel:adiabaticPerturbationTheory:2003} to show the existence of the intertwining unitary. 
\begin{proposition}[Intertwining unitary]\label{effective_quantum_dynamics:effective_dynamics:prop:unitary}
	Let $\{ E_n \}_{n \in \Index}$ be a family of bands separated by a gap from the others and let Assumption~\ref{intro:assumption:V_Gamma} be satisfied. If $d > 3$, assume $u_0 \in \Hoermr{0}{0} \bigl ( \mathcal{B}(\Hfast) \bigr )$. Then there exists a unitary operator $U : \Htau \longrightarrow L^2(\BZ) \otimes L^2(\T^d_y)$ such that $U = \Opk^A(u) + \ordern(\eps^{\infty})$ where 
	\begin{align*}
		u \asymp \sum_{n = 0}^{\infty} \eps^n u_n \in \SemiHoer{0} \bigl ( \mathcal{B}(\Hfast) \bigr )
	\end{align*}
	is right-$\tau$-covariant at any order and has principal symbol $u_0$. Each term in the expansion can be written as a finite sum 
	\begin{align*}
		u_n = \sum_{k = 0}^n \lambda^k u_{(n,k)} 
	\end{align*}
	ordered by powers of $\lambda$. For $\lambda \rightarrow 0$, the unitary $u$ reduces to the nonmagnetic unitary $u^0 \asymp \sum_{n = 0}^{\infty} \eps^n \, u_{(n,0)}$. 
\end{proposition}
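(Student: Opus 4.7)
The plan is to mirror the recursive defect construction already employed for the projection (Proposition~\ref{effective_quantum_dynamics:effective_dynamics:prop:projection}) and to adapt the nonmagnetic argument of \cite[Proposition~5.18]{Teufel:adiabaticPerturbationTheory:2003} to the magnetic setting. The symbol $u$ is built order by order by solving the unitarity and intertwining defect equations~\eqref{effective_quantum_dynamics:effective_dynamics:eqn:unitarity_defects} with the explicit recursion
\begin{align*}
u_{n+1} = (a_{n+1} + b_{n+1}) \, u_0
,
&&
a_{n+1} = - \tfrac{1}{2} A_{n+1}
,
&&
b_{n+1} = \bigl [ \piref , B_{n+1} \bigr ]
,
\end{align*}
with any convenient choice (e.g.\ zero) for the diagonal block of $b_{n+1}$ which is undetermined by the construction. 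Having $u^{(n)}$ one verifies inductively that the defects $A_{n+1}$ and $B_{n+1}$ are again well-defined operator-valued symbols: the magnetic two-parameter expansion \cite{Lein:two_parameter_asymptotics:2008} together with the composition and boundedness results of \cite{Iftimie_Mantiou_Purice:magneticPseudodifferentialOperators:2005} shows that all objects entering the recursion lie in $\Hoer{0}\bigl(\mathcal{B}(\Hfast)\bigr)$ and that the whole construction depends only on $B$, not on the choice of vector potential.

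Once the $u_n$ are at hand I would Borel-resum them to obtain a genuine semiclassical symbol $u \in \SemiHoer{0}\bigl(\mathcal{B}(\Hfast)\bigr)$ with asymptotic expansion $u \asymp \sum_n \eps^n u_n$. Right-$\tau$-covariance, i.e.\ $u(k-\gamma^*,r) = \tau(\gamma^*) u(k,r)$, is propagated at each step of the recursion from the $\tau$-equivariance of $\pi$ and $\Hamiltonian^{\Zak}$ together with the fact that the magnetic Moyal product respects this covariance property (the phase $\gamma^B_\eps$ depends only on the slow variable $r$, which is $\tau$-insensitive). The finite $\lambda$-expansion $u_n = \sum_{k=0}^n \lambda^k u_{(n,k)}$ follows exactly as in the projection case: the $n$-th coefficient only involves expanding the magnetic Weyl product up to order $n$ in $\eps$, and by \cite{Lein:two_parameter_asymptotics:2008} each such coefficient carries at most $n$ powers of $\lambda$. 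The reduction to the nonmagnetic unitary $u^0$ as $\lambda \to 0$ is then obtained by interchanging the oscillatory integrals defining $\magW$ with the limit, as in the preceding discussion of $\pi$.

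Finally I would promote the almost-unitary $\Opk^A(u)$ to a genuine unitary. The unitarity defects~\eqref{effective_quantum_dynamics:effective_dynamics:eqn:unitarity_defects} together with the magnetic Caldéron-Vaillancourt theorem give $\Opk^A(u)\,\Opk^A(u)^* = \id + \ordern(\eps^\infty)$ and similarly for the reverse order, so one may set
\begin{align*}
U := \bigl ( \Opk^A(u) \, \Opk^A(u)^* \bigr )^{-1/2} \, \Opk^A(u)
,
\end{align*}
the inverse square root being well-defined for $\eps$ small by functional calculus. The intertwining property $U \Pi U^* = \Piref$ then holds up to $\ordern(\eps^\infty)$, and by a standard deformation/inner-automorphism argument in the spirit of \cite[Lemma~3.19]{Teufel:adiabaticPerturbationTheory:2003} one can adjust $U$ by an $\eps^\infty$-small unitary so that the intertwining becomes exact; this adjustment does not alter the asymptotic expansion. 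I expect the main technical obstacle to be this last exact-intertwining step in the magnetic context: one has to make sure that the conjugating unitary used to restore the exact relation $U \Pi U^* = \Piref$ is again controlled in the magnetic Hörmander calculus and respects right-$\tau$-covariance, which in the standard proof relies on the unitarity of $\hat{u}_0$ and bounded functional calculus that here must be argued via the magnetic composition results rather than non-magnetic ones. The $d\geq 4$ hypothesis $u_0 \in \Hoermr{0}{0}\bigl(\mathcal{B}(\Hfast)\bigr)$ enters precisely to make the principal symbol $u_0$ available as a starting point for the recursion; for $d \leq 3$ it is furnished automatically by Panati's triviality theorem as recalled in Assumption~\ref{mag_sapt:mag_wc:assumption:smooth_frame}.
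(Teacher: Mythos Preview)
Your proposal is correct and follows essentially the same approach as the paper's own sketch: recursive construction of $u_{n+1}=(a_{n+1}+b_{n+1})u_0$ via the defects~\eqref{effective_quantum_dynamics:effective_dynamics:eqn:u_a}--\eqref{effective_quantum_dynamics:effective_dynamics:eqn:u_b}, symbol-class control via \cite{Lein:two_parameter_asymptotics:2008}, right-$\tau$-covariance and the finite $\lambda$-structure, and promotion to a true unitary. The only cosmetic difference is that the paper invokes the Nagy formula from \cite{Teufel:adiabaticPerturbationTheory:2003} directly for the last step, which packages your square-root unitarization and exact-intertwining adjustment into a single standard argument.
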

\begin{proof}[Sketch]
	Equations~\eqref{effective_quantum_dynamics:effective_dynamics:eqn:u_a} and \eqref{effective_quantum_dynamics:effective_dynamics:eqn:u_b} give us $a_{n+1}$ and $b_{n+1}$ which combine to $u_{n+1} = (a_{n+1} + b_{n+1}) u_0$; by Theorem~1.1 from \cite{Lein:two_parameter_asymptotics:2008} it is also in the correct symbol class, namely $\Hoermr{0}{0} \bigl ( \mathcal{B}(\Hfast) \bigr )$. The right $\tau$-covariance is also obvious from the ansatz. 
	
	Lastly, the true unitary $U$ is obtained via the Nagy formula as described in \cite{Teufel:adiabaticPerturbationTheory:2003}. 
\end{proof}
\begin{proof}[Theorem~\ref{effective_quantum_dynamics:effective_dynamics:thm:adiabatic_decoupling} (Sketch)]
	The existence of $\Pi$ and $U$ have been the subject of Propositions~\ref{effective_quantum_dynamics:effective_dynamics:prop:projection} and \ref{effective_quantum_dynamics:effective_dynamics:prop:unitary}. By right-$\tau$-covariance of $u$, $\heff$ is a $\Gamma^*$-periodic symbol and since it is the magnetic Weyl product of $\BCont^{\infty} \bigl ( T^* \R^d , \mathcal{B} \bigl ( L^2(\T^d_y) \bigr ) \bigr )$ functions, Theorem~1.1 from \cite{Lein:two_parameter_asymptotics:2008} ensures that the product and its asymptotic two-parameter expansion are in $\BCont^{\infty} \bigl ( T^* \R^d , \mathcal{B} \bigl ( L^2(\T^d_y) \bigr ) \bigr )$ as well. Equation~\eqref{effective_quantum_dynamics:effective_dynamics:eqn:adiabatic_decoupling_dynamics} follows as usual from a Duhamel argument. 
\end{proof}
The crucial statement of Theorem~\ref{effective_quantum_dynamics:effective_dynamics:thm:adiabatic_decoupling} is equation~\eqref{effective_quantum_dynamics:effective_dynamics:eqn:adiabatic_decoupling_dynamics} and it is worthwhile to discuss its implications to applications: in practice, one only computes finitely many terms of the asymptotic expansions of $\Pi$, $U$ and $\heff$. Let us call their finite resummations $u^{(l)} := \sum_{n = 0}^l \eps^n \, u_n$, $\pi^{(l)} := \sum_{n = 0}^l \eps^n \, \pi_n$ and $\heff^{(l)} := \sum_{n = 0}^l \eps^n \, \heff_n$. Assume we are interested in times $t = \eps^k s$ where $\abs{t} \leq \tau$. Then a closer inspection of the Duhamel argument in the proof of Theorem~\ref{effective_quantum_dynamics:effective_dynamics:thm:adiabatic_decoupling} yields 
\begin{align*}
	e^{- i \frac{t}{\eps^k} \Hamiltonian^{\Zak}} \Pi = \Opk^A \bigl ( {u^{(n)}}^* \bigr ) \, e^{- i \frac{t}{\eps^k} \Opk^A(\heff^{(n+k)})} \, \Piref \, \Opk^A \bigl ( u^{(n)} \bigr ) + \order_{\norm{\cdot}} \bigl ( \eps^{n-k+1} \bigr )
	. 
\end{align*}
Hence, if we would like to consider macroscopic times $t = \eps s$ and make an error in the propagation of order $\order(\eps^2)$, then we need to expand the tilted projection, intertwining unitary and the effective hamiltonian to second order. However, if the relevant band consists of a single band, computing the first-order correction to the effective hamiltonian suffices as we shall see in Section~\ref{eom}. 


\subsection{Effective dynamics for a single band: the Peierls substitution} 
\label{effective_quantum_dynamics:effective_dynamics_for_a_single_band}
In case the relevant part of the spectrum consists of a single non-degenerate band $\Eb$, we can calculate the first-order correction to $\heff$ explicitly: the magnetic Weyl product reduces to the pointwise product to zeroth order in $\eps$. Thus, we can directly compute 
\begin{align*}
	\heff_0 = \piref \, u_0 \, H_0 \, u_0^* \, \piref =: \piref \, h_0 \, \piref = \Eb + \phi 
	. 
\end{align*}
For the first order, we use the recursion formula \cite[eq.~(3.35)]{Teufel:adiabaticPerturbationTheory:2003} and the fact that $\heff_0$ is a scalar-valued symbol: 
\begin{align*}
	\heff_1 &= \Bigl ( u_1 \, H_0 - h_0 \, u_1 + (u_0 \magW H_0)_{(1)} - (h_0 \magW u_0)_{(1)} \Bigr ) u_0^* 
	\\
	&= \piref \, \bigl [ u_1 u_0^* , h_0 \bigr ] \, \piref + (u_0 \magW H_0)_{(1)} u_0^* - (h_0 \magW u_0)_{(1)}  u_0^*
	\\
	&= - \tfrac{i}{2} \piref \, \Bigl ( \bigl \{ u_0 , H_0 \bigr \}_{\lambda B} - \bigl \{ h_0 , u_0 \bigr \}_{\lambda B} \Bigr ) \, \piref 
\end{align*}
The term with the magnetic Poisson bracket can be easily computed: 
\begin{align*}
	\piref \, \Bigl ( \bigl \{ u_0 , &H_0 \bigr \}_{\lambda B} - \bigl \{ h_0 , u_0 \bigr \}_{\lambda B} \Bigr ) \, u_0^* \, \piref 
	= 
	\\
	&= \piref \, \Bigl ( \partial_{k_l} u_0 \, \partial_{r_l} H_0 - \lambda B_{lj} \, \partial_{k_l} u_0 \, u_0^* \, 
	\bigl ( \partial_{k_j} h_0 \, u_0 - \partial_{k_j} u_0 \, u_0^* \, h_0 - h_0 \, u_0 \, \partial_{k_j} u_0^* \bigr )
	+ \\
	&\qquad \qquad 
	+ \partial_{r_l} h_0 \, \partial_{k_l} u_0 + \lambda B_{lj} \, \partial_{k_l} h_0 \, \partial_{k_j} u_0 \Bigr ) \, u_0^* \, \piref 
	\\
	&= 2 i \bigl ( \partial_{r_l} \phi - \lambda B_{lj} \, \partial_{k_j} \Eb \bigr ) \, \BerryC_l 
	+ \lambda B_{lj} \, \piref \, \partial_{k_l} u_0 \, u_0^* \, \bigl ( \partial_{k_j} u_0 \, u_0^* \, h_0 + h_0 \, u_0 \, \partial_{k_j} u_0^* \bigr ) \, \piref 
	\\
	&= 2 i \bigl ( \partial_{r_l} \phi - \lambda B_{lj} \, \partial_{k_j} \Eb \bigr ) \, \BerryC_l + \lambda B_{lj} \, \bscpro{\partial_{k_l} \varphi_{\mathrm{b}}}{\bigl ( \Hper - \Eb \bigr ) \partial_{k_j} \varphi_{\mathrm{b}}} 
\end{align*}
The first term combines to a Lorentz force term, the second one -- which is purely imaginary -- yields the Rammal-Wilkinson term. The components of the magnetic field are $\BCont^{\infty}(\R^d)$ functions and hence, principal and subprincipal symbol are in $\BCont^{\infty}(T^* \R^d)$ as well. 

This means, we have proven the following corollary to Theorem~\ref{effective_quantum_dynamics:effective_dynamics:thm:adiabatic_decoupling}: 
\begin{corollary}\label{effective_quantum_dynamics:effective_dynamics:cor:adiabatic_decoupling_explicit}
	Under the assumptions of Theorem~\ref{effective_quantum_dynamics:effective_dynamics:thm:adiabatic_decoupling}, the principal and subprincipal symbol of the effective hamiltonian for a single non-degenerate Bloch band $\Eb$ are given by 
	\begin{align}
		\heff_0 &= \Eb + \phi
		\\
		\heff_1 &= - \bigl ( - \partial_{r_l} \phi + \lambda B_{lj} \, \partial_{k_j} \Eb \bigr ) \, \BerryC_l - \lambda B_{lj} \, \mathcal{M}_{lj} 
		\notag \\
		&=: - F_{\mathrm{Lor} \, l} \, \BerryC_l - \lambda B_{lj} \, \mathcal{M}_{lj} 
		\notag 
		. 
	\end{align}
	where 
	\begin{align*}
		\BerryC_l(k) := i \scpro{\varphi_{\mathrm{b}}(k)}{\nabla_k \varphi_{\mathrm{b}}(k)} 
	\end{align*}
	and 
	\begin{align*}
		\mathcal{M}_{lj}(k) = \Re \Bigl ( \tfrac{i}{2} \bscpro{\partial_{k_l} \varphi_{\mathrm{b}}}{\bigl ( \Hper(k) - \Eb(k) \bigr ) \, \partial_{k_j} \varphi_{\mathrm{b}}} \Bigr )
	\end{align*}
	are the Berry connection and the so-called Rammal-Wilkinson term, respectively. 
\end{corollary}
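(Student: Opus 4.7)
The plan is to read off $\heff_0$ and $\heff_1$ directly from the two-parameter expansion of the $\magW$-product applied to the defining formula $\heff = \piref \, u \magW \Hamiltonian^{\Zak} \magW u^* \, \piref$, using that in the non-degenerate case the reference projection $\piref$ is rank-one. All the necessary symbolic calculus (composition, expansion, equivariance preservation) is provided by Theorem~\ref{effective_quantum_dynamics:effective_dynamics:thm:adiabatic_decoupling} and the references cited therein, so the work reduces to an explicit manipulation at orders $\eps^0$ and $\eps^1$.

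First I would treat the principal symbol. Since $\magW$ reduces to pointwise multiplication at zeroth order, $\heff_0 = \piref \, u_0 \, \Hamiltonian^{\Zak} \, u_0^* \, \piref$. By construction $u_0(k)$ maps the Bloch eigenvector $\varphi_{\mathrm{b}}(k)$ to a fixed reference vector, so $\piref \, u_0 \, \Hper^{\Zak} \, u_0^* \, \piref = \Eb \, \piref$, and since $\phi(r)$ is scalar it passes through the unitary, giving $\heff_0 = \Eb + \phi$.

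Next I would pass to the subprincipal symbol. Using the recursion from \cite{Teufel:adiabaticPerturbationTheory:2003} adapted to $\magW$,
\begin{align*}
	\heff_1 = \piref \, \bigl [ u_1 u_0^* , h_0 \bigr ] \, \piref - \tfrac{i}{2} \piref \, \Bigl ( \{u_0 , \Hamiltonian^{\Zak}\}_{\lambda B} - \{h_0 , u_0\}_{\lambda B} \Bigr ) \, u_0^* \, \piref .
\end{align*}
In the non-degenerate case $h_0$ restricted to $\piref L^2(\T^d_y)$ is the scalar $\Eb + \phi$, so the commutator term drops out and only the magnetic Poisson bracket contribution remains. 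I would then expand the bracket into its four pieces using $\partial_{r_l} \Hamiltonian^{\Zak} = \partial_{r_l} h_0 = \partial_{r_l} \phi$ (scalar, hence it pulls out and combines with $\partial_{k_l} u_0 \, u_0^*$ via $\mathcal{A}_l = i \scpro{\varphi_{\mathrm{b}}}{\partial_{k_l}\varphi_{\mathrm{b}}}$), and for the magnetic pieces use that $\phi$-contributions cancel in $\{u_0,\Hamiltonian^{\Zak}\}_{\lambda B} - \{h_0,u_0\}_{\lambda B}$ by antisymmetry of $B_{lj}$ together with scalarity of $\phi$. This produces the Lorentz-force factor $\partial_{r_l}\phi - \lambda B_{lj}\partial_{k_j}\Eb$ multiplying $\mathcal{A}_l$.

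The main obstacle, and the step I would handle last, is extracting the Rammal-Wilkinson term from the remaining piece $\lambda B_{lj} \, \piref \, \partial_{k_l} u_0 \, u_0^* \, \bigl ( \partial_{k_j} u_0 \, u_0^* \, h_0 + h_0 \, u_0 \, \partial_{k_j} u_0^* \bigr ) \, \piref$. Here I would use the identity $u_0 \, u_0^* = 1$ (whence $(\partial_{k_j} u_0) u_0^* = - u_0 \, \partial_{k_j} u_0^*$) and the intertwining relation $u_0^* \, \piref \, u_0 = \pi_0$ to rewrite the two summands so that $h_0$ is replaced by $\Hper^{\Zak}$ and an $\Eb$-counterterm. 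Evaluating the matrix element in the Bloch basis produces $\bscpro{\partial_{k_l}\varphi_{\mathrm{b}}}{(\Hper - \Eb) \partial_{k_j}\varphi_{\mathrm{b}}}$, and because this tensor is contracted with the antisymmetric $B_{lj}$, only its real part survives, yielding $-\lambda B_{lj} \mathcal{M}_{lj}$. Smoothness and boundedness of $\heff_0$ and $\heff_1$ follow immediately from Assumption~\ref{intro:assumption:em_fields} and the smoothness of $\Eb$ and $\varphi_{\mathrm{b}}$ on the isolated non-degenerate band.
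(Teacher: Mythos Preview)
Your proposal is correct and follows essentially the same route as the paper: compute $\heff_0$ from the pointwise product, invoke the recursion formula \cite[eq.~(3.35)]{Teufel:adiabaticPerturbationTheory:2003} for $\heff_1$, drop the commutator term by scalarity of $h_0$, and expand the magnetic Poisson bracket to extract the Lorentz-force and Rammal--Wilkinson contributions. One small remark on your last step: the reason the contraction $\lambda B_{lj}\,\bscpro{\partial_{k_l}\varphi_{\mathrm{b}}}{(\Hper-\Eb)\partial_{k_j}\varphi_{\mathrm{b}}}$ is compatible with the $\Re$ in the definition of $\mathcal{M}_{lj}$ is that the matrix element is Hermitian in $(l,j)$, so the antisymmetric contraction with $B_{lj}$ makes the sum purely imaginary \emph{before} the factor $-\tfrac{i}{2}$; thus the $\Re$ is redundant in the contracted expression rather than selecting a surviving piece.
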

To leading order, this is the well-known Peierls substitution. In particular, for zero electric field, $\phi \equiv 0$, and constant magnetic field (and the equations written in symmetric gauge), the ansatz 
\begin{align*}
	\Eb(k) = \sum_{j = 1}^d \cos k_j 
\end{align*}
yields the celebrated Hofstadter model \cite{Hofstadter:Bloch_electron_magnetic_fields:1976}. 


\section{Derivation of semiclassical equations of motion} 
\label{eom}
In the preceding section, we have approximated the full quantum evolution by a simpler effective \emph{quantum} evolution on a smaller reference space $\Href = L^2(\BZ) \otimes \C^N$. Now, we will link the time evolution generated by $\heff$ -- and thus $e^{- i \frac{t}{\eps} \opHamiltonian^{\Zak}}$ -- to a \emph{semiclassical flow} which contains quantum corrections. 
Conceptually, this is a two-step process: if we reconsider the diagram of spaces, 
\begin{align}
	\bfig
		\node L2R(-900,0)[L^2(\R^d_x)]
		\node piL2R(-900,-600)[\Zak^{-1} \Pi \Zak L^2(\R^d_x)]
		\node Htau(0,0)[\Htau]
		\node piHtau(0,-600)[\Pi \Htau]
		\node HslowHfast(900,0)[L^2(\BZ) \otimes L^2(\T^d_y)]
		\node Href(900,-600)[L^2(\BZ) \otimes \C^N]
		\arrow[L2R`Htau;\Zak]
		\arrow/-->/[L2R`piL2R;\Zak^{-1} \Pi \Zak]
		\arrow/-->/[piL2R`piHtau;]
		\arrow[Htau`piHtau;\Pi]
		\arrow[Htau`HslowHfast;U]
		\arrow[HslowHfast`Href;\Piref]
		\arrow/-->/[piHtau`Href;]
		\Loop(-900,0){L^2(\R^d_x)}(ur,ul)_{e^{-i \frac{t}{\eps} \opHamiltonian}} 
		\Loop(0,0){\Htau}(ur,ul)_{e^{-i \frac{t}{\eps} \opHamiltonian^{\Zak}}} 
		\Loop(900,-600){L^2(\BZ) \otimes \C^N}(dr,dl)^{e^{-i \frac{t}{\eps} \Opk^A(\heff)}} 
	\efig
	\label{eom:diagram:perturbed}
\end{align}
we notice that our physical observables live on the upper-left space $L^2(\R^d_x)$ -- or equivalently on $\Htau$. The effective evolution generated by $\heff$ approximates the dynamics if the initial states are localized in the almost invariant subspace associated to the relevant bands. In this section, we always assume the relevant part of the spectrum consists of a \emph{single non-degenerate band} $\Eb$ and thus $L^2(\BZ) \otimes \C^1 \cong L^2(\BZ)$. 

In a first step, we need to connect the semiclassical dynamics in the left column of the diagram with those in the lower-right corner. The second, much simpler step is to establish an Egorov-type theorem on the reference space $\Href = L^2(\BZ) \otimes \C^N$.

\subsection{Relation between dynamics for macroscopic and effective observables} 
\label{eom:connecting_dynamics}
Since we are concerned with the semiclassical dynamics of a particle in an electromagnetic field, the magnetic field must enter in the classical equations of motion. There are two ways: either one uses minimal coupling, \ie one writes down the equations of motion for position $r$ and kinetic momentum $k = k' - \lambda A(r)$ with respect to the usual symplectic form for the variables $(k',r)$ and the hamiltonian $\Hamiltonian^{\Zak} \bigl ( k' - \lambda A(r) , r \bigr )$. Then the classical flow which enters the Egorov theorem is generated by 
\begin{align}
	\left (
	\begin{matrix}
		\lambda B(r) & - \id \\
		+ \id & 0 \\
	\end{matrix}
	\right )
	\left (
	\begin{matrix}
		\dot{r} \\
		\dot{k} \\
	\end{matrix}
	\right )
	= 
	\left (
	\begin{matrix}
		\nabla_r \\
		\nabla_k \\
	\end{matrix}
	\right )
	\Hamiltonian^{\Zak}(k,r)
	\label{eom:eqn:initial_eom}
\end{align}
where the appearance of $B$ in the matrix representation of the symplectic form is due to the fact that $k$ is \emph{kinetic} momentum. 
What constitutes a suitable observable? Physically, we are interested in measurements on the \emph{macroscopic} scale, \ie the observable should be independent of the \emph{microscopic} degrees of freedom. On the level of symbols, this means $f(k,r)$ has to commute pointwise with the hamiltonian $\Hamiltonian^{\Zak}(k,r)$ for all $k$ and $r$. Hence, such an observable is a constant of motion with respect to the \emph{fast} dynamics. In the simplest case, the observables are scalar-valued. This also ensures we are able to ``separate'' the contributions to the full dynamics band-by-band. Note that this \emph{by no means} implies $\Opk^A(f)$ commutes with $\Opk^A(\Hamiltonian^{\Zak})$, but rather that all of the non-commutativity is contained in the slow variables $(k,r)$. 
\begin{definition}[Macroscopic semiclassical observable]\label{eom:defn:semiclassical_observable}
	A macropscopic observable $f$ is a scalar-valued semiclassical symbol (cf.~Definition~\ref{rewriting:magnetic_weyl_calculus:defn:semiclassical_symbol}) $\SemiHoermr{0}{0}(\C)$ which is $\Gamma^*$-periodic in $k$, $f(k + \gamma^*,r) = f(k,r)$ for all $(k,r) \in T^* \R^d$, $\gamma^* \in \Gamma^*$. 
\end{definition}
Our assumption that our dynamics lives on the almost-invariant subspace $\Pi \Htau$ modifies the classical dynamics to first order in $\eps$ as well: instead of using $\KA$ and $\Reps$ as building block observables, the proper observables should be $\Pi \KA \Pi$ and $\Pi \Reps \Pi$. Equivalently, we can switch to the reference space representation and use the magnetic quantization of 
\begin{align}
	\keff &:= \piref \, u \magW k \magW u^* \, \piref = k + \eps \lambda B(r) \BerryC(k) + \ordere{2}
	\label{eom:eqn:effective_building_blocks} \\
	\reff &:= \piref \, u \magW r \magW u^* \, \piref = r + \eps \BerryC(k) + \ordere{2} 
	\notag 
\end{align}
The crucial proposition we will prove next says that for suitable observables $f$, the effect of going to the effective representation is, up to errors of order $\eps^2$ at least, equivalent to replacing the arguments $k$ and $r$ by $\keff$ and $\reff$, 
\begin{align}
	\Piref U \Opk^A(f) U^* \Piref &= \Opk^A \bigl ( \piref \, u \magW f \magW u^* \, \piref \bigr ) + \ordern(\eps^{\infty}) 
	\notag \\
	&
	=: \Opk^A(f_{\mathrm{eff}}) + \ordern(\eps^{\infty}) 
	. 
\end{align}
Then it follows that the effective observable $f_{\mathrm{eff}}$ coincides with the original observable $f$ after a change of variables up to errors of order $\ordere{2}$, 
\begin{align}
	f_{\mathrm{eff}} = \piref \, u \magW f \magW u^* \, \piref = f \circ T_{\mathrm{eff}} + \ordere{2}
	, 
	\label{eom:eqn:effective_sym_replacement}
\end{align}
where the map $T_{\mathrm{eff}} : (k,r) \mapsto (\keff,\reff)$ maps the observables $k$ and $r$ onto the effective observables $\keff$ and $\reff$ defined via equations~\eqref{eom:eqn:effective_building_blocks} . 
\begin{proposition}\label{eom:prop:effective_sym_building_block}
	Let $f$ be a macroscopic semiclassical observable. Then up to errors of order $\eps^2$ equation~\eqref{eom:eqn:effective_sym_replacement} holds and consequently, we have 
	\begin{align}
		\Piref \, U \, \Opk^A(f) \, U^* \, \Piref = \Opk^A \bigl ( f_{\mathrm{eff}} \bigr ) + \ordern(\eps^{\infty}) 
		= \Opk^A \bigl ( f \circ T_{\mathrm{eff}} \bigr ) + \ordern(\eps^2) 
		\label{eom:eqn:effective_op_replacement} 
		. 
	\end{align}
\end{proposition}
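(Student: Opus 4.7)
The proof splits naturally into two parts. For the first equality $\Piref U \Opk^A(f) U^* \Piref = \Opk^A(f_{\mathrm{eff}}) + \ordern(\eps^\infty)$, I would invoke Proposition~\ref{effective_quantum_dynamics:effective_dynamics:prop:unitary}, which gives $U = \Opk^A(u) + \ordern(\eps^\infty)$, together with the composition formula $\Opk^A(a) \Opk^A(b) = \Opk^A(a \magW b)$ from the magnetic Weyl calculus (using the magnetic Caldéron--Vaillancourt theorem to control the $\ordern(\eps^\infty)$ error in operator norm). Since $\piref$ is independent of $(k,r)$, one has $\Piref = \Opk^A(\piref)$ and its magnetic Moyal product with any symbol reduces to pointwise multiplication. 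This identifies the effective symbol as $f_{\mathrm{eff}} = \piref \, u \magW f \magW u^* \, \piref$ and establishes the first equality.

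For the second equality, I would expand $f_{\mathrm{eff}}$ asymptotically in $\eps$ and compare with the Taylor expansion of $f \circ T_{\mathrm{eff}}$ obtained from~\eqref{eom:eqn:effective_building_blocks}. To leading order, $f$ scalar-valued together with $u_0 u_0^* = 1$ gives $\piref u_0 f u_0^* \piref = f \, \piref$, matching $f(k,r) \piref$. At first order, expanding the triple Moyal product yields
\begin{align*}
	(u \magW f \magW u^*)_{(1)} = u_1 f u_0^* + u_0 f u_1^* - \tfrac{i}{2}\{u_0, f\}_{\lambda B} u_0^* - \tfrac{i}{2}\{u_0 f, u_0^*\}_{\lambda B}.
\end{align*}
The unitarity condition $u \magW u^* = 1$ at order $\eps$ gives $u_1 u_0^* + u_0 u_1^* = \tfrac{i}{2}\{u_0, u_0^*\}_{\lambda B}$, and the Leibniz rule $\{u_0 f, u_0^*\}_{\lambda B} = \{u_0, u_0^*\}_{\lambda B} f + u_0 \{f, u_0^*\}_{\lambda B}$ combined with scalarity of $f$ causes all $u_1$-contributions to cancel. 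Using also $(\partial_{k_l} u_0) u_0^* = -u_0 (\partial_{k_l} u_0^*)$ (from differentiating $u_0 u_0^* = 1$) and $\partial_{r_l} u_0 = 0$ (since $u_0$ is built from the unperturbed Bloch data), the first-order term simplifies to
\begin{align*}
	(u \magW f \magW u^*)_{(1)} = -i \, \partial_{r_l} f \, (\partial_{k_l} u_0) u_0^* - i \lambda B_{lj} \, \partial_{k_l} f \, (\partial_{k_j} u_0) u_0^*.
\end{align*}

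The remaining geometric input is the identity $\piref \, (\partial_{k_l} u_0) \, u_0^* \, \piref = i \BerryC_l(k) \, \piref$, which I would derive using the explicit form $u_0(k) = \sopro{\chi}{\varphi_{\mathrm{b}}(k)} + u_0^\perp(k)$, the vanishing $\piref u_0^\perp = 0 = (u_0^\perp)^* \piref$ (by construction $u_0^\perp$ maps into $(\ran \piref)^\perp$), and the standard relation $\sscpro{\partial_{k_l} \varphi_{\mathrm{b}}}{\varphi_{\mathrm{b}}} = i \BerryC_l$ for normalized Bloch functions. Substituting yields $\piref (u \magW f \magW u^*)_{(1)} \piref = (\partial_{r_l} f \, \BerryC_l + \lambda B_{lj} \partial_{k_l} f \, \BerryC_j) \, \piref$, which is precisely the first-order Taylor coefficient of $f(\keff, \reff) \, \piref$ read off from~\eqref{eom:eqn:effective_building_blocks}. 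This proves $f_{\mathrm{eff}} = f \circ T_{\mathrm{eff}} + \ordere{2}$, and combined with the first equality gives~\eqref{eom:eqn:effective_op_replacement}. The main obstacle is the first-order Moyal bookkeeping: respecting operator ordering while $f$ is scalar and $u_0$ is matrix-valued, and combining unitarity with Leibniz to eliminate $u_1$. Once these manipulations are in place and the Berry-connection identity is recognized, the match with $T_{\mathrm{eff}}$ is immediate.
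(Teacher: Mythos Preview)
Your proof is correct and follows the same overall strategy as the paper: establish the first equality via $U = \Opk^A(u) + \ordern(\eps^{\infty})$ and the composition law, then expand $f_{\mathrm{eff}}$ to first order in $\eps$ and match it against the Taylor expansion of $f \circ T_{\mathrm{eff}}$, using the Berry-connection identity $\piref \, (\partial_{k_l} u_0) \, u_0^* \, \piref = i \BerryC_l \, \piref$.

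The only organizational difference is in how the first-order term is extracted. You expand the triple product $u \magW f \magW u^*$ directly and eliminate the $u_1$-contributions by invoking unitarity $u_1 u_0^* + u_0 u_1^* = \tfrac{i}{2}\{u_0,u_0^*\}_{\lambda B}$ together with the Leibniz rule. The paper instead reuses the recursion formula from the effective-hamiltonian computation (the analog of $h \magW u = u \magW H$), writing
\begin{align*}
	f_{\mathrm{eff} \, 1} = \piref \, \Bigl ( u_0 f_1 + u_1 f_0 - f_{\mathrm{eff} \, 0} u_1 + (u_0 \magW f_0)_{(1)} - (f_{\mathrm{eff} \, 0} \magW u_0)_{(1)} \Bigr ) u_0^* \, \piref ,
\end{align*}
where the $u_1$-terms cancel immediately because $f_0 = f_{\mathrm{eff}\,0}$ is scalar. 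This shortcut avoids your unitarity-and-Leibniz bookkeeping, but the two computations are equivalent and land on the same expression $f_1 - i (\partial_{r_j} f_0 + \lambda B_{lj} \partial_{k_l} f_0) \, \piref \, \partial_{k_j} u_0 \, u_0^* \, \piref$. One small omission: in your displayed first-order expansion you should also include the trivial term $u_0 f_1 u_0^* = f_1$, since $f$ is a \emph{semiclassical} observable with its own $\eps$-expansion; it passes straight through to match the $f_1$ term on the Taylor side.
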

\begin{proof}
	The equivalence of the left-hand sides of equations~\eqref{eom:eqn:effective_op_replacement} and \eqref{eom:eqn:effective_sym_replacement} follows from $U = \Opk^A(u) + \ordern(\eps^{\infty})$ and the fact that we only need to consider the first two terms in the $\eps$ expansion. With the help of Theorem~1.1 from \cite{Lein:two_parameter_asymptotics:2008}, we conclude $f_{\mathrm{eff}} \in \SemiHoer{0}$ is also a semiclassical symbol of order $0$. The left-hand side of \eqref{eom:eqn:effective_sym_replacement} can be computed explicitly: to zeroth order, nothing changes as $f$ commutes pointwise with $u$ and $u^*$, 
	\begin{align*}
		f_{\mathrm{eff} \, 0} = \piref \, u_0 f u_0^* \, \piref = f_0 
		. 
	\end{align*}
	To first order, we have 
	\begin{align*}
		f_{\mathrm{eff} \, 1} &= \piref \, \Bigl ( u_0 f_1 + u_1 f_0 - f_{\mathrm{eff} \, 0} u_1 + (u_0 \magW f_0)_{(1)} - (f_{\mathrm{eff} \, 0} \magW u_0)_{(1)} \Bigr ) u_0^* \, \piref 
		\\
		&= \piref \, u_0 f_1 u_0^* \, \piref - \tfrac{i}{2} \Bigl ( \bigl \{ u_0 , f_0 \bigr \}_{\lambda B} - 
		\bigl \{ f_{\mathrm{eff} \, 0} , u_0 \bigr \}_{\lambda B} \Bigr ) 
		\\
		&= f_1 - i \bigl ( \partial_{r_j} f_0 + \lambda B_{lj}(r) \, \partial_{k_l} f_0 \bigr ) \, \piref \, \partial_{k_j} u_0 \, u_0^* \, \piref 
		\\
		&=  f_1 + \bigl ( \partial_{r_j} f_0 + \lambda B_{lj}(r) \, \partial_{k_l} f_0 \bigr ) \, \BerryC_j 
		. 
	\end{align*}
	On the other hand, if we Taylor expand $f \circ T_{\mathrm{eff}} = f \bigl ( \keff , \reff \bigr )$ to first order in $\eps$, we get 
	\begin{align*}
		f \bigl ( \keff , \reff \bigr ) &= f_0 \bigl ( k + \eps \lambda B(r) \BerryC(k) + \ordere{2} , r + \eps \BerryC(k) + \ordere{2} \bigr ) + 
		\\
		&\qquad \qquad 
		+ \eps f_1 \bigl ( k + \eps \lambda B(r) \BerryC(k) + \ordere{2} , r + \eps \BerryC(k) + \ordere{2} \bigr ) + \ordere{2} 
		\\
		&= f_0(k,r) 
		+ \eps \Bigl ( f_1(k,r) + \lambda \partial_{k_l} f_0(k,r) \, B_{lj}(r) \, \BerryC_j(k) 
		+ \Bigr . \\
		&\qquad \qquad \qquad \qquad \Bigl . 
		+ \partial_{r_j} f_0(k,r) \, \BerryC_j(k) \Bigr ) 
		+ \ordere{2} 
	\end{align*}
	which coincides with $f_{\mathrm{eff}}$ up to $\ordere{2}$. 
\end{proof}
Now if the equations of motion~\eqref{eom:eqn:initial_eom} are an approximation of the full quantum dynamics, what are the equations of motion with respect to the effective variables? The classical flow $\Phi^{\mathrm{eff}}_t$ generated by $\heff$ with respect to the magnetic symplectic form (equation~\eqref{eom:eqn:initial_eom}) can be rewritten in terms of effective variables, 
\begin{align}
	\Phi^{\mathrm{macro}}_t = T_{\mathrm{eff}} \circ \Phi^{\mathrm{eff}}_t \circ T_{\mathrm{eff}}^{-1} + \ordere{2} 
	. 
	\label{eom:eqn:macro_flow}
\end{align}
The right-hand side \emph{does not serve} as a definition for the flow of the macroscopic observables, but it is a consequence: $\Phi^{\mathrm{macro}}_t$ is the flow associated to a modified symplectic form and a modified hamiltonian. The modified symplectic form includes the Berry curvature associated to $\Eb$ acting as a pseudo-magnetic field on the position variables. 
\begin{proposition}\label{eom:prop:macro_flow}
	Let $\Phi_t^{\mathrm{macro}}$ be the flow on $T^* \R^d$ generated by 
	\begin{align}
		\left (
		\begin{matrix}
			\lambda B(\reff) & - \id \\
			+ \id & \eps \Omega(\keff) \\
		\end{matrix}
		\right )
		\left (
		\begin{matrix}
			\dot{r}_{\mathrm{eff}} \\
			\dot{k}_{\mathrm{eff}} \\
		\end{matrix}
		\right )
		= 
		\left (
		\begin{matrix}
			\nabla_{\reff} \\
			\nabla_{\keff} \\
		\end{matrix}
		\right )
		h_{\mathrm{sc}}(\keff,\reff)
		\label{eom:eqn:macro_eom}
	\end{align}
	where the \emph{semiclassical hamiltonian} is given by 
	\begin{align}
		h_{\mathrm{sc}} := \heff \circ T_{\mathrm{eff}}^{-1} 
		. 
	\end{align}
	Then equation~\eqref{eom:eqn:macro_flow} holds, \ie 
	$\Phi_t^{\mathrm{macro}}$ and $T_{\mathrm{eff}} \circ \Phi_t^{\mathrm{eff}} \circ T_{\mathrm{eff}}^{-1}$ agree up to errors of order $\eps^2$. 
\end{proposition}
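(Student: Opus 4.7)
The plan is to view equation~\eqref{eom:eqn:macro_flow} as the statement that the near-identity change of variables $T_{\mathrm{eff}}$ is, modulo $\order(\eps^2)$, a symplectomorphism from the effective magnetic symplectic form
\begin{align*}
	\omega_{\mathrm{eff}} := dk \wedge dr + \tfrac{\lambda}{2} B_{ij}(r)\, dr_i \wedge dr_j
\end{align*}
to the modified symplectic form
\begin{align*}
	\omega_{\mathrm{macro}} := d\keff \wedge d\reff + \tfrac{\lambda}{2} B_{ij}(\reff)\, d\reff_i \wedge d\reff_j + \tfrac{\eps}{2}\, \Omega_{ij}(\keff)\, d\keff_i \wedge d\keff_j
\end{align*}
whose antisymmetric matrix in the basis $(\dot{\reff},\dot{\keff})$ is exactly the left-hand side of~\eqref{eom:eqn:macro_eom}. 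By construction $\Phi_t^{\mathrm{eff}}$ is the Hamiltonian flow of $\heff$ with respect to $\omega_{\mathrm{eff}}$, and $\Phi_t^{\mathrm{macro}}$ is the Hamiltonian flow of $h_{\mathrm{sc}} = \heff \circ T_{\mathrm{eff}}^{-1}$ with respect to $\omega_{\mathrm{macro}}$.

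The central step will be to verify the pullback identity $T_{\mathrm{eff}}^{\ast}\omega_{\mathrm{macro}} = \omega_{\mathrm{eff}} + \order(\eps^2)$ by direct Taylor expansion using $\keff = k + \eps \lambda B(r)\BerryC(k) + \order(\eps^2)$ and $\reff = r + \eps \BerryC(k) + \order(\eps^2)$ from~\eqref{eom:eqn:effective_building_blocks}. Three $\order(\eps)$ contributions appear and have to be accounted for. First, the cross product $d\keff \wedge d\reff$ produces $\eps\, dk \wedge d\BerryC(k)$, which via the identity $d\BerryC = \tfrac{1}{2} \Omega_{ij}(k)\, dk_i \wedge dk_j$ (with $\Omega_{ij} := \partial_{k_i}\BerryC_j - \partial_{k_j}\BerryC_i$) equals $-\tfrac{\eps}{2}\Omega_{ij}(k)\, dk_i \wedge dk_j$ and cancels exactly against the new curvature block of $\omega_{\mathrm{macro}}$ evaluated at $\keff = k + \order(\eps)$. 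Second, the mixed terms $\eps\, d(\lambda B\BerryC) \wedge dr$ from the cross product and the corresponding $dk \wedge dr$ piece arising from the Taylor expansion of $B_{ij}(\reff)$ in the magnetic block cancel after using the antisymmetry of $B$. Third, the remaining pure $dr_i \wedge dr_j$ contributions assemble into a totally antisymmetric combination in $(i,j,k)$ of $\partial_{r_k}B_{ij}\BerryC_k$, which vanishes by the closedness relation $dB = 0$ (equivalently, $\partial_{r_k}B_{ij} + \partial_{r_i}B_{jk} + \partial_{r_j}B_{ki} = 0$) that holds since $B = dA$.

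Once $T_{\mathrm{eff}}^{\ast}\omega_{\mathrm{macro}} = \omega_{\mathrm{eff}} + \order(\eps^2)$ is in hand, I would invoke $h_{\mathrm{sc}}\circ T_{\mathrm{eff}} = \heff$ (exact by definition) to conclude that the Hamiltonian vector fields $(T_{\mathrm{eff}})_{\ast} X^{\omega_{\mathrm{eff}}}_{\heff}$ and $X^{\omega_{\mathrm{macro}}}_{h_{\mathrm{sc}}}$ differ by $\order(\eps^2)$ in local $\BCont^{\infty}$-seminorms on compact sets. Invertibility of the macroscopic symplectic matrix for small $\eps$ is automatic since it is an $\order(\eps,\lambda)$ perturbation of the canonical block-antidiagonal one. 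A standard Gronwall estimate on any bounded time interval then propagates the infinitesimal $\order(\eps^2)$ discrepancy of the vector fields to the claimed flow estimate $\Phi_t^{\mathrm{macro}} = T_{\mathrm{eff}} \circ \Phi_t^{\mathrm{eff}} \circ T_{\mathrm{eff}}^{-1} + \order(\eps^2)$.

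The main obstacle will be the bookkeeping of the $\order(\eps)$ terms in the pullback, in particular the cancellation of the mixed $dk \wedge dr$ terms and the collapse of the $dr \wedge dr$ terms via $dB = 0$. These rely on carefully combining three algebraic facts: the antisymmetry of both $B$ and $\Omega$, the geometric identification of $\Omega$ as the curvature of the Berry connection $\BerryC$, and the closedness of the magnetic two-form. Once these are verified the rest is a routine application of Hamiltonian perturbation theory, and the statement would hold verbatim for any magnetic field with components in $\BCont^{\infty}(\R^d)$ as in Assumption~\ref{intro:assumption:em_fields}.
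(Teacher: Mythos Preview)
Your argument is correct and takes a more geometric route than the paper. The paper proceeds by direct substitution of the change of variables $T_{\mathrm{eff}}^{-1}$ into the equation of motion~\eqref{eom:eqn:initial_eom}: it Taylor-expands the symplectic matrix, the time derivatives $(\dot r,\dot k)$, and the gradients $(\nabla_r,\nabla_k)$ separately in the effective variables and reassembles them into~\eqref{eom:eqn:macro_eom}. You instead verify the coordinate-free statement $T_{\mathrm{eff}}^{\ast}\omega_{\mathrm{macro}} = \omega_{\mathrm{eff}} + \order(\eps^2)$ and then invoke naturality of Hamiltonian vector fields under approximate symplectomorphisms. The underlying computation is the same --- the paper's Jacobian substitution is nothing but the matrix form of your pullback --- but your organisation makes the mechanism of the three cancellations explicit: $\Omega = d\BerryC$ for the $dk\wedge dk$ block, antisymmetry of $B$ for the mixed block, and the closedness $dB = 0$ for the $dr\wedge dr$ block, whereas the paper absorbs all of this into the phrase ``can be worked out explicitly.'' The final step is identical in both: the paper cites Lemma~5.24 of \cite{Teufel:adiabaticPerturbationTheory:2003}, which is precisely the Gronwall-type propagation of the $\order(\eps^2)$ vector-field discrepancy you describe.
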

\begin{proof}
	We express $k$ and $r$ in terms of $\keff$ and $\reff$ in \eqref{eom:eqn:initial_eom} since, for $\eps$ small enough, $T_{\mathrm{eff}} : (k,r) \mapsto (\keff,\reff)$ is a bijection. For instance, the semiclassical hamiltonian $\heff \circ T_{\mathrm{eff}}^{-1}$ simplifies to 
	\begin{align*}
		h_{\mathrm{sc}}(\keff,\reff) :=& \, \heff \bigl ( \keff - \eps \lambda B(\reff) \BerryC(\keff) , \reff - \eps \BerryC(\keff) \bigr ) + \ordere{2} 
		\notag \\ 
		=& \, \bigl ( \Eb(\keff) + \phi(\reff) \bigr ) - \eps \lambda B(\reff) \cdot \mathcal{M}(\keff) + \ordere{2} 
		. 
	\end{align*}
	The symplectic form can be easily expanded to 
	\begin{align*}
		&\left (
		\begin{matrix}
			\lambda B(\reff - \eps \BerryC(\keff)) & - \id \\
			+ \id & 0 \\
		\end{matrix}
		\right ) 
		= 
		\\
		&\qquad \qquad 
		= 
		\left (
		\begin{matrix}
			\lambda B(\reff) & - \id \\
			+ \id & 0 \\
		\end{matrix}
		\right ) - \eps \left (
		\begin{matrix}
			\lambda \partial_{{\reff}_l} B(\reff) \, \BerryC(\keff) & 0 \\
			0 & 0 \\
		\end{matrix}
		\right ) 
		+ \ordere{2} 
		. 
	\end{align*}
	The other two terms, the time derivatives and gradients of $\keff$ and $\reff$ have slightly more complicated expansions, but they can be worked out explicitly. Then if we put all of them together, we arrive at the modified symplectic form~\eqref{eom:eqn:macro_eom}. This proves the first claim. 
	
	Hence, the hamiltonian vector fields agree up to $\ordere{2}$ and Lemma~5.24 in \cite{Teufel:adiabaticPerturbationTheory:2003} implies that also the flows differ only by $\ordere{2}$. 
\end{proof}
\begin{remark}
	These equations of motion have first been proposed in the appendix of \cite{PST:effDynamics:2003} and we have derived them in a more systematic fashion. The effective coordinates $r_{\mathrm{eff}}$ and $k_{\mathrm{eff}}$ are associated to a non-standard Poisson structure on $T^* \R^d_k$: from equation~\eqref{eom:eqn:macro_eom}, one can read off that the Poisson bracket with respect to $r_{\mathrm{eff}}$ and $k_{\mathrm{eff}}$ is given by 
	\begin{align*}
		\bigl \{ f , g \bigr \}_{\lambda B , \eps \Omega} = \sum_{l = 1}^d \bigl ( \partial_{\xi_l} f \, \partial_{x_l} g - \partial_{x_l} f \, \partial_{\xi_l} g \bigr ) - \sum_{l , j = 1}^d \bigl ( \lambda \, B_{lj} \, \partial_{\xi_l} f \, \partial_{\xi_j} g - \eps \, \Omega_{lj} \, \partial_{x_l} f \, \partial_{x_j} g \bigr )
	\end{align*}
	and thus different components of position $r_{\mathrm{eff}}$ no longer commute, 
	\begin{align*}
		\bigl \{ r_{\mathrm{eff} \, l} , r_{\mathrm{eff} \, j} \bigr \}_{\lambda B , \eps \Omega} &= - \eps \, \Omega_{lj} 
		. 
	\end{align*}
	Hence, $\Omega$ acts as a pseudomagnetic field that is due to quantum effects. 
\end{remark}
%


\subsection{An Egorov-type theorem} 
\label{eom:egorov}
The semiclassical approximation hinges on an Egorov-type theorem which we first prove on the level of effective dynamics: 
\begin{theorem}
	Let $\heff$ be the effective hamiltonian as given by Theorem~\ref{effective_quantum_dynamics:effective_dynamics:thm:adiabatic_decoupling} associated to an isolated, non-degenerate Bloch band $\Eb$. Then for any $\Gamma^*$-periodic semiclassical observable $f \in \SemiHoer{0}$, $f = f_0 + \eps f_1$, the flow $\Phi^{\mathrm{eff}}_t$ generated by $\heff$ with respect to the magnetic symplectic form (equation~\eqref{eom:eqn:initial_eom}) approximates the quantum evolution uniformly for all $t \in [-T , +T]$, 
	\begin{align}
		\Bnorm{e^{+ i \frac{t}{\eps} \Opk^A(\heff)} \, \Opk^A(f) \, e^{- i \frac{t}{\eps} \Opk^A(\heff)} - \Opk^A \bigl ( f \circ \Phi^{\mathrm{eff}}_t \bigr )}_{\mathcal{B}(L^2(\BZ))} \leq C \eps^2 
		. 
	\end{align}
\end{theorem}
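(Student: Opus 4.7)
My plan is to run a Duhamel argument in the magnetic Weyl calculus, whose two preparatory ingredients are (i) uniform control of the evolved symbol $g_t := f \circ \Phi^{\mathrm{eff}}_t$ in $\SemiHoer{0}$ for $t \in [-T,+T]$, and (ii) the expansion of the magnetic Moyal commutator of two scalar symbols with error $\ordere{2}$ beyond the magnetic Poisson bracket. For (i), the hamiltonian vector field of $\heff$ with respect to the magnetic symplectic form has all components in $\BCont^{\infty}(T^*\R^d)$: the principal symbol $\heff_0 = \Eb + \phi$, the subprincipal symbol $\heff_1$, and the components of $B$ all lie in $\BCont^{\infty}$. A Gronwall-type estimate propagates boundedness of derivatives of $g_t$ from those of $f$, while $\Gamma^*$-periodicity in $k$ is preserved because $\heff$ is $\Gamma^*$-periodic.

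Writing $U_s := e^{-is \Opk^A(\heff)/\eps}$, I would introduce the interpolation $\psi(s) := U_s^* \, \Opk^A(g_{t-s}) \, U_s$, so that $\psi(0) = \Opk^A(g_t)$ and $\psi(t) = U_t^* \Opk^A(f) U_t$. Differentiating in $s$ and using the transport equation $\partial_\tau g_\tau = \{\heff, g_\tau\}_{\lambda B}$ that defines the flow yields
\begin{align*}
U_t^* \Opk^A(f) U_t - \Opk^A(g_t) = \int_0^t \dd s \; U_s^* \, \Opk^A(r_{t-s}) \, U_s,
\end{align*}
with remainder symbol
\begin{align*}
r_\tau := \tfrac{i}{\eps}\bigl( \heff \magW g_\tau - g_\tau \magW \heff \bigr) - \bigl\{ \heff , g_\tau \bigr\}_{\lambda B}.
\end{align*}

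The decisive input is (ii). Starting from the oscillatory integral representation~\eqref{rewriting:magnetic_weyl_calculus:eqn:magnetic_product} of $\magW$, exchanging the integration variables $(y,\eta) \leftrightarrow (z,\zeta)$ interchanges $\heff$ and $g_\tau$ in the integrand while flipping the signs of both the symplectic phase $\tfrac{\eps}{2}\sigma((y,\eta),(z,\zeta))$ and the scaled magnetic flux $\eps\,\gamma^B_{\eps}(x,y,z)$, the latter because the defining triangle is reversed. The commutator of scalar symbols can therefore be written as a single oscillatory integral involving $2i\sin$ of a real phase that is odd under this exchange. The two-parameter expansion of~\cite{Lein:two_parameter_asymptotics:2008} then delivers $\heff \magW g_\tau - g_\tau \magW \heff = -i\eps\{\heff,g_\tau\}_{\lambda B} + \ordere{3}$, whence $r_\tau \in \eps^2 \, \Hoer{0}$ uniformly in $\tau \in [-T,+T]$, with seminorms controlled by those of $\heff$, $f$ and the components of $B$.

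Finally, the magnetic Calderón-Vaillancourt theorem~\cite{Iftimie_Mantiou_Purice:magneticPseudodifferentialOperators:2005} yields $\snorm{\Opk^A(r_{t-s})}_{\mathcal{B}(L^2(\BZ))} \leq C \eps^2$, with $C$ controlled by finitely many such seminorms; unitarity of $U_s$ together with integration over $s \in [0,t]$ produces the claimed bound, with constant proportional to $T$. The main obstacle is (ii): a naive term-by-term expansion of the Moyal product would only give $r_\tau = \order(\eps)$ and hence an error $\order(\eps T)$ rather than $\order(\eps^2)$; the sharper estimate requires the cancellation above, which is not manifest at the level of the symbol expansion in the magnetic setting and must be extracted from the structure of the defining oscillatory integral.
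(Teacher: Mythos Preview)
Your proof is correct and follows the same Duhamel-plus-Moyal-commutator strategy as the paper. In fact you are more explicit than the paper on the one nontrivial point: the paper simply asserts $\tfrac{i}{\eps}[\heff,g_\tau]_{\magW} = \{\heff,g_\tau\}_{\lambda B} + \ordere{2}$, whereas you correctly identify that this requires the even-order terms of the magnetic Moyal expansion to cancel in the commutator and supply the parity argument (sign flip of both the symplectic phase and the magnetic flux under $(y,\eta)\leftrightarrow(z,\zeta)$) that makes this work.
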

\begin{proof}
	Since $\heff \in \BCont^{\infty}(T^* \R^d)$, the flow inherits the smoothness and $f \circ \Phi^{\mathrm{eff}}_t , \frac{\dd}{\dd t} \bigl ( f \circ \Phi^{\mathrm{eff}}_t \bigr ) \in \SemiHoer{0}(\C)$ remain also $\Gamma^*$-periodic in the momentum variable. To compare the two time-evlutions, we use the usual Duhammel trick which yields 
	\begin{align}
		e^{+ i \frac{t}{\eps} \Opk^A(\heff)} \, &\Opk^A(f) \, e^{- i \frac{t}{\eps} \Opk^A(\heff)} - \Opk^A \bigl ( f \circ \Phi^{\mathrm{eff}}_t \bigr ) =
		\notag \\
		&= \int_0^t \dd s \, \frac{\dd}{\dd s} \Bigl ( e^{+ i \frac{s}{\eps} \Opk^A(\heff)} \, \Opk^A \bigl ( f \circ \Phi^{\mathrm{eff}}_{t-s} \bigr ) \, e^{- i \frac{s}{\eps} \Opk^A(\heff)} \Bigr )
		\notag \\
		&= \int_0^t \dd s \, e^{+ i \frac{s}{\eps} \Opk^A(\heff)} \cdot
		\Bigl ( \tfrac{i}{\eps} \bigl [ \Opk^A(\heff) , \Opk^A \bigl ( f \circ \Phi^{\mathrm{eff}}_{t-s} \bigr ) \bigr ] 
		+ \notag \\
		&\qquad \qquad \qquad \qquad \qquad \qquad \qquad 
		- \Opk^A \bigl ( \tfrac{\dd}{\dd s} \bigl ( f \circ \Phi^{\mathrm{eff}}_{t-s} \bigr ) \bigr ) \Bigr ) \, e^{- i \frac{s}{\eps} \Opk^A(\heff)} 
		\notag \\
		&= \int_0^t \dd s \, e^{+ i \frac{s}{\eps} \Opk^A(\heff)} \, \Opk^A \Bigl ( \tfrac{i}{\eps} \bigl [ \heff , f \circ \Phi^{\mathrm{eff}}_{t-s} \bigr ]_{\magW} 
		+ \notag \\
		&\qquad \qquad \qquad \qquad \qquad \qquad \qquad 
		- \bigl \{ \heff , f \circ \Phi^{\mathrm{eff}}_{t-s} \bigr \}_{\lambda B} \Bigr ) \, e^{- i \frac{s}{\eps} \Opk^A(\heff)} 
		. 
		\label{eom:eqn:Egorov_Duhammel}
	\end{align}
	The magnetic Moyal commutator -- to first order -- agrees with the magnetic Poisson bracket, 
	\begin{align*}
		\tfrac{i}{\eps} \bigl [ \heff , f \circ \Phi^{\mathrm{eff}}_{t-s} \bigr ]_{\magW} &= \bigl \{ \heff , f \circ \Phi^{\mathrm{eff}}_{t-s} \bigr \}_{\lambda B} + \ordere{2} 
		\\
		&= \sum_{l = 1}^d \bigl ( \partial_{k_l} \heff \, \partial_{r_l} f - \partial_{r_l} \heff \, \partial_{k_l} f \bigr ) - \sum_{l,j = 1}^d B_{lj} \, \partial_{k_l} \heff \, \partial_{k_j} f + \ordere{2} 
		. 
	\end{align*}
	Hence, the term to be quantized in equation~\eqref{eom:eqn:Egorov_Duhammel} vanishes up to first order in $\eps$, 
	\begin{align*}
		\mbox{r.h.s.~of \eqref{eom:eqn:Egorov_Duhammel}} &= \int_0^t \dd s \, e^{+ i \frac{s}{\eps} \Opk^A(\heff)} \, \Opk^A \bigl ( 0 + \ordere{2} \bigr ) \, e^{- i \frac{s}{\eps} \Opk^A(\heff)} = \ordern(\eps^2)
		. 
	\end{align*}
	This finishes the proof. 
\end{proof}
The main result combines Proposition~\ref{eom:prop:effective_sym_building_block} with the Egorov theorem we have just proven: 
\begin{theorem}[Semiclassical limit]\label{eom:thm:semiclassical_limit}
	Let Assumptions~\ref{intro:assumption:V_Gamma}, \ref{intro:assumption:em_fields}, \ref{mag_sapt:mag_wc:defn:gap_condition} be satisfied; if $d \geq 4$, assume in addition that Assumption~\ref{mag_sapt:mag_wc:assumption:smooth_frame} holds true. Furthermore, let us assume the relevant part of the spectrum consists of a single non-degenerate Bloch band $\Eb$. Then for all macroscopic semiclassical observables $f$ (Definition~\ref{eom:defn:semiclassical_observable}) the full quantum evolution can be approximated by the hamiltonian flow $\Phi^{\mathrm{macro}}_t$ as given in Proposition~\ref{eom:prop:macro_flow} if the initial state is localized in the corresponding almost invariant subspace $\Zak^{-1} \Pi \Zak L^2(\R^d_x)$, 
	\begin{align}
		\Bnorm{\Zak^{-1} \Pi \Zak \, \Bigl ( e^{+ i \frac{t}{\eps} \opHamiltonian} \, \Opk^A(f) \, e^{- i \frac{t}{\eps} \hat{H}} - \Opk^A \bigl ( f \circ \Phi^{\mathrm{macro}}_t \bigr ) \Bigr ) \, \Zak^{-1} \Pi \Zak}_{\mathcal{B}(L^2(\R^d_x))} \leq C_T \eps^2 
		. 
	\end{align}
\end{theorem}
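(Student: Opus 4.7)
The plan is to chain together the adiabatic decoupling of Theorem~\ref{effective_quantum_dynamics:effective_dynamics:thm:adiabatic_decoupling}, the observable replacement of Proposition~\ref{eom:prop:effective_sym_building_block}, the effective Egorov theorem just proven, and the macroscopic flow identification of Proposition~\ref{eom:prop:macro_flow}. After conjugating by $\Zak$, the estimate to control is the $\mathcal{B}(\Htau)$-norm of $\Pi \bigl( e^{+i\frac{t}{\eps}\opHamiltonian^{\Zak}} \Opk^A(f) e^{-i\frac{t}{\eps}\opHamiltonian^{\Zak}} - \Opk^A(f \circ \Phi^{\mathrm{macro}}_t) \bigr) \Pi$. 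Since $[\opHamiltonian^{\Zak}, \Pi] = \ordern(\eps^\infty)$, the propagators commute with $\Pi$ up to errors of arbitrarily large order in $\eps$, so I may freely insert projectors around both factors without penalty.

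First I would use $U \Pi = \Piref U + \ordern(\eps^\infty)$ and equation~\eqref{effective_quantum_dynamics:effective_dynamics:eqn:adiabatic_decoupling_dynamics} to transport the Heisenberg evolution to the reference space, rewriting
\[
\Pi \, e^{+i\frac{t}{\eps}\opHamiltonian^{\Zak}} \Opk^A(f) \, e^{-i\frac{t}{\eps}\opHamiltonian^{\Zak}} \Pi = \Pi U^* e^{+i\frac{t}{\eps}\Opk^A(\heff)} \bigl( \Piref U \Opk^A(f) U^* \Piref \bigr) e^{-i\frac{t}{\eps}\Opk^A(\heff)} U \Pi + \ordern\bigl(\eps^{\infty}(1+\abs{t})\bigr).
\]
Proposition~\ref{eom:prop:effective_sym_building_block} replaces the bracketed factor by $\Opk^A(f \circ T_{\mathrm{eff}})$ modulo $\ordern(\eps^2)$, so the right-hand side becomes an effective Heisenberg evolution of $\Opk^A(f \circ T_{\mathrm{eff}})$ generated by $\Opk^A(\heff)$.

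Next I would apply the Egorov theorem to that effective evolution, after verifying that $f \circ T_{\mathrm{eff}}$ is a $\Gamma^*$-periodic semiclassical symbol in $\SemiHoer{0}(\C)$; this is true because $T_{\mathrm{eff}}$ shifts $(k,r)$ by the $\ordere{}$ expressions $\lambda B(r)\BerryC(k)$ and $\BerryC(k)$, both smooth, bounded and $\Gamma^*$-periodic in $k$. The Egorov theorem then yields $\Opk^A\bigl((f \circ T_{\mathrm{eff}}) \circ \Phi^{\mathrm{eff}}_t\bigr) + \ordern(\eps^2)$ uniformly for $t \in [-T,T]$. Proposition~\ref{eom:prop:macro_flow} lets me rewrite $(f \circ T_{\mathrm{eff}}) \circ \Phi^{\mathrm{eff}}_t = (f \circ \Phi^{\mathrm{macro}}_t) \circ T_{\mathrm{eff}} + \ordere{2}$, and a second application of Proposition~\ref{eom:prop:effective_sym_building_block} -- now with macroscopic observable $g := f \circ \Phi^{\mathrm{macro}}_t$ in place of $f$ -- folds the expression back to $\Pi \Opk^A(g) \Pi + \ordern(\eps^2)$. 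Conjugating by $\Zak^{-1}$ completes the argument.

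The main obstacle will be the uniform-in-$t$ symbol-class control of $g = f \circ \Phi^{\mathrm{macro}}_t$ needed to legitimately apply Proposition~\ref{eom:prop:effective_sym_building_block} at the last step. This reduces to showing that $\Phi^{\mathrm{macro}}_t$ has derivatives of all orders bounded uniformly on $[-T,T]$ and respects $\Gamma^*$-periodicity in momentum. Both properties follow from the fact that the semiclassical hamiltonian $h_{\mathrm{sc}}$ and the modified symplectic form in~\eqref{eom:eqn:macro_eom} are built entirely from $\BCont^\infty$ data ($\Eb$, $\phi$, $B$, $\BerryC$, $\mathcal{M}$, $\Omega$), with $\Eb$, $\BerryC$, $\mathcal{M}$ and $\Omega$ periodic under $\Gamma^*$-translations in $k$, so that standard ODE arguments (Grönwall together with differentiation under the flow) deliver both the bounded derivatives and the periodicity; once this is in hand the four error contributions collapse to the claimed bound $C_T \eps^2$.
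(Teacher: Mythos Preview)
Your proposal is correct and follows essentially the same route as the paper: conjugate by $\Zak$, transport to the reference space via $U$ and $\Piref$, replace $\Piref U \Opk^A(f) U^* \Piref$ by $\Opk^A(f \circ T_{\mathrm{eff}})$ using Proposition~\ref{eom:prop:effective_sym_building_block}, apply the effective Egorov theorem, trade $T_{\mathrm{eff}} \circ \Phi^{\mathrm{eff}}_t$ for $\Phi^{\mathrm{macro}}_t \circ T_{\mathrm{eff}}$ via Proposition~\ref{eom:prop:macro_flow}, and undo the $T_{\mathrm{eff}}$ with a second use of Proposition~\ref{eom:prop:effective_sym_building_block}. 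You are in fact slightly more careful than the paper in checking that $f \circ T_{\mathrm{eff}}$ and $f \circ \Phi^{\mathrm{macro}}_t$ remain $\Gamma^*$-periodic $\SemiHoer{0}$ symbols uniformly in $t$, which the paper leaves implicit.
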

\begin{proof}
	We now combine all of these results to approximate the dynamics: let $f$ be a macroscopic observable. Then if we start with a state in the range of $\Zak^{-1} \Pi \Zak$, the time-evolved observable can be written as 
	\begin{align*}
		\Zak^{-1} \, \Pi \, &\Zak e^{- i \frac{t}{\eps} \hat{H}^{\eps}} \, \Zak^{-1} \Opk^A(f) \Zak \, e^{+ i \frac{t}{\eps} \hat{H}^{\eps}} \Zak^{-1} \Pi \Zak = 
		\\
		&= \Zak^{-1} \, \Pi e^{-i \frac{t}{\eps} \hat{H}^{\Zak}} \Opk^A(f) e^{+ i \frac{t}{\eps} \hat{H}^{\Zak}} \Pi \, \Zak 
		\\ 
		&= \Zak^{-1} \, U^{-1} \Piref U U^{-1} e^{- i \frac{t}{\eps} \hat{h}} U \Opk^A(f) U^{-1} e^{+ i \frac{t}{\eps} \hat{h}} U U^{-1} \Piref U \, \Zak + \ordern(\eps^{\infty})
		\\
		&= \Zak^{-1} \, U^{-1} \Piref e^{- i \frac{t}{\eps} \hat{h}} U \Opk^A(f) U^{-1} e^{+ i \frac{t}{\eps} \hat{h}} \Piref U \, \Zak + \ordern(\eps^{\infty})
		\\
		&= \Zak^{-1} \, U^{-1} \Piref e^{- i \frac{t}{\eps} \hat{h}_{\mathrm{eff}}} \Piref U \Opk^A(f) U^{-1} \Piref e^{+ i \frac{t}{\eps} \hat{h}_{\mathrm{eff}}} \Piref U \, \Zak + \ordern(\eps^{\infty})
		. 
	\end{align*}
	After replacing $U$ with $\Opk^A(u)$ (which adds another $\ordern(\eps^{\infty})$ error) and $\Piref$ with $\Opk^A(\piref)$, the term in the middle combines to the quantization of $f_{\mathrm{eff}} = \piref \, u \magW f \magW u^* \, \piref$. We apply Proposition~\ref{eom:prop:effective_sym_building_block} and the Egorov theorem involving $\heff$ and obtain 
	\begin{align*}
		\ldots &= \Zak^{-1} \, U^{-1} \Piref e^{- i \frac{t}{\eps} \hat{h}_{\mathrm{eff}}} \, \Opk^A \bigl (\piref \, u \magW f \magW u^* \, \piref \bigr ) \, e^{+ i \frac{t}{\eps} \hat{h}_{\mathrm{eff}}} \Piref U \, \Zak + \ordern(\eps^{\infty})
		\\
		&= \Zak^{-1} \, U^{-1} \Piref e^{- i \frac{t}{\eps} \hat{h}_{\mathrm{eff}}} \, \Opk^A \bigl (f_{\mathrm{eff}} \bigr ) \, e^{+ i \frac{t}{\eps} \hat{h}_{\mathrm{eff}}} \Piref U \, \Zak + \ordern(\eps^{\infty})
		\\
		&= \Zak^{-1} \, \Pi \, U^{-1} \, \Opk^A \bigl (f \circ T_{\mathrm{eff}} \circ \Phi^{\mathrm{eff}}_t \bigr ) \, U \, \Pi \, \Zak + \ordern(\eps^{2})
		. 
	\end{align*}
	Since two flows are $\ordere{2}$ close if the corresponding hamiltonian vector fields are \cite[Lemma~5.24]{Teufel:adiabaticPerturbationTheory:2003}, we conclude 
	\begin{align*}
		\ldots &= \Zak^{-1} \, \Pi \, U^{-1} \, \Opk^A \bigl (f \circ T_{\mathrm{eff}} \circ \Phi^{\mathrm{eff}}_t \circ T_{\mathrm{eff}}^{-1} \circ T_{\mathrm{eff}} \bigr ) \, U \, \Pi \, \Zak + \ordern(\eps^{2})
		\\
		&= \Zak^{-1} \, \Pi \, U^{-1} \, \Opk^A \bigl (f \circ \Phi^{\mathrm{macro}}_t \circ T_{\mathrm{eff}} \bigr ) \, U \, \Pi \, \Zak + \ordern(\eps^{\infty})
		\\
		&= \Zak^{-1} \, \Pi \, \Opk^A \bigl (f \circ \Phi^{\mathrm{macro}}_t \bigr ) \, \Pi \, \Zak + \ordern(\eps^{2})
		. 
	\end{align*}
	This finishes the proof. 
\end{proof}
%





\bibliographystyle{abbrv}
\bibliography{magWQ}

%

\end{document}